\newcommand{\p}{\mathsf{p}}
\newcommand{\q}{\mathsf{q}}
\newcommand{\Psucc}{\Pro_{\mathsf{succ}}}
\newcommand{\T}{\EuScript{T}}
\newcommand{\Tf}{\EuScript{T}_{\rm f}}
\newcommand{\A}{\EuScript{A}}
\newcommand{\B}{\EuScript{B}}
\newcommand{\D}{\EuScript{D}}
\newcommand{\V}{\EuScript{V}}
\newcommand{\U}{\EuScript{U}}
\newcommand{\R}{\EuScript{R}}
\renewcommand{\P}{\EuScript{P}}
\newcommand{\cZ}{\mathscr{Z}}
\newcommand{\cB}{\mathscr{B}}
\newcommand{\cBk}{\cB_{t_1}^{t_k}}
\newcommand{\I}{\mathbbm{I}}
\newcommand{\ind}[1]{\I_{\{#1\}}}
\newcommand{\1}{\mathbbm{1}}
\newcommand{\Pro}{\mathbb{P}}
\newcommand{\Exp}{\mathbb{E}}
\newcommand{\qed}{{\hfill\ensuremath{\square}}}
\shorttitle{The secretary problem with random queries} 
\begin{document}

\title{Optimal Stopping Methodology\\ for the Secretary Problem with Random Queries} 

\authorone[University of Patras]{George V. Moustakides} 
\authortwo[Xi'an Jiaotong-Liverpool University]{Xujun Liu} 
\authorthree[University of Illinois, Urbana-Champaign]{Olgica Milenkovic} 
\addressone{Department of Electrical and Computer Engineering, University of Patras, Rion, GREECE} 
\emailone{moustaki@upatras.gr} 
\addresstwo{Department of Foundational Mathematics, Xi’an Jiaotong-Liverpool University, Suzhou, CHINA}
\emailtwo{Xujun.Liu@xjtlu.edu.cn}
\addressthree{Department of Electrical and Computer Engineering, University of Illinois, Urbana-Champaign, IL, USA} 
\emailthree{milenkov@illinois.edu} 



\begin{abstract}
Candidates arrive sequentially for an interview process which results in them being ranked relative to their predecessors. Based on the ranks available at each time, one must develop a decision mechanism that selects or dismisses the current candidate in an effort to maximize the chance to select the best. This classical version of the ``Secretary problem'' has been studied in depth using mostly combinatorial approaches, along with numerous other variants. In this work we consider a particular new version where during reviewing one is allowed to query an external expert to improve the probability of making the correct decision. Unlike existing formulations, we consider experts that are not necessarily infallible and may provide suggestions that can be faulty. For the solution of our problem we adopt a probabilistic methodology and view the querying times as consecutive stopping times which we optimize with the help of optimal stopping theory. For each querying time we must also design a mechanism to decide whether we should terminate the search at the querying time or not. This decision is straightforward under the usual assumption of infallible experts but, when experts are faulty, it has a far more intricate structure. 
\end{abstract}

\keywords{Multiple stopping times; Dowry problem; Querying}

\ams{60G40}{62L15} 

\section{Introduction}

The secretary problem, also known as the game of Googol, the beauty contest problem, and the dowry problem, was formally introduced in \cite{G}, while the first solution was obtained in \cite{L}. A widely used version of the secretary problem can be stated as follows: $n$ individuals are ordered without ties according to their qualifications. They apply for a ``secretary'' position, and are interviewed one by one, in a uniformly random order. When the $t$-th candidate appears, she or he can only be ranked (again without ties) with respect to the $t-1$ already interviewed individuals. At the time of the $t$-th interview, the employer must make the decision to hire the person present or continue with the interview process by rejecting the candidate; rejected candidates cannot be revisited at a later time. The employer succeeds only if the best candidate is hired. If only one selection is to be made, the question of interest is to determine a strategy (i.e., final rule) that maximizes the probability of selecting the best candidate. 

In \cite{L} the problem was solved using algebraic methods with backward recursions while \cite{D} considered the process as a Markov chain. An extension of the secretary problem, known as the \emph{dowry problem with multiple choices} (for simplicity, we refer to it as the dowry problem), was studied in \cite{GM}. In the dowry problem, one is allowed to select $s>1$ candidates during the interview process, and the condition for success is that the selected group includes the best candidate. This review process can be motivated in many different ways: for example, one may view the $s$-selection to represent candidates invited for a second round of interviews. In \cite{GM} we find a heuristic solution to the dowry problem while \cite{S1} solved the problem using a functional-equation approach of the dynamic programming method. 

In \cite{GM} the authors also offer various extensions to the secretary problem in many different directions. For example, they examine the secretary problem (single choice) when the objective is to maximize the probability of selecting the best or the second-best candidate, while the more generalized version of selecting one of the top $\ell$ candidates was considered in \cite{GZ}. Additionally in \cite{GM} the authors studied the full information game where the interviewer is allowed to observe the actual values of the candidates, which are chosen independently from a known distribution. Several other extensions have been considered in the literature: the postdoc problem, for which the objective is to find the second-best candidate \cite{GM,R};
the problem of selecting all or one of the top $\ell$ candidates when $\ell$ choices are allowed \cite{N,T1,T2}. 
More recently, the problem is considered under a model where interviews are performed according to a nonuniform distribution such as the Mallows or Ewens, \cite{CJMTUW,J,LM,LMM}. For more details regarding the history of the secretary problem, the interested reader may refer to \cite{F1,F2}.

The secretary problem \textit{with expert queries}, an extension of the secretary problem with multiple choices, was introduced in~\cite{LMM} and solved using combinatorial methods for a generalized collection of distributions that includes the Mallows distribution. In this extended version it is assumed that the decision making entity has access to a limited number of infallible experts. When faced with a candidate in the interviewing process an expert, if queried, provides a binary answer of the form ``The best'' or ``Not the best''. Queries to experts are frequently used in interviews as an additional source of information and the feedback is usually valuable but it does not mean that it is necessarily accurate. This motivates the investigation of the secretary problem when the response of the expert is not deterministic (infallible) but it may also be false. This can be modeled by assuming that the response of the expert is \textit{random}. Actually, the response does not even have to be binary as in the random query model employed in \cite{CPM,MS} for the completely different problem of clustering. In our analysis we consider more than two possibilities which could reflect the level of confidence of the expert in its knowledge about the current candidate being the best or not. For example a four-level response could be of the form: ``The best with high confidence'', ``The best with low confidence'', ``Not the best with low confidence'' and ``Not the best with high confidence''. As we will see, the analysis of the problem under a randomized expert model requires stochastic optimization techniques and in particular results we are going to borrow from optimal stopping theory \cite{PS,S2}. 

The idea of augmenting the classical information of relative ranks with auxiliary \textit{random} information (e.g.~coming from a fallible expert) has also been addressed in \cite{DLLV}. In this work the authors consider various stochastic models for the auxiliary information which is assumed to become available to the decision maker \textit{with every new candidate}. The goal is the same as in the classical secretary problem, namely optimize the final termination time. 
This must be compared to the problem we are considering in our current work where auxiliary information becomes available \textit{only} at querying times which constitute a \textit{sequence} of stopping times that must be selected optimally. Furthermore, at each querying time, using the extra information provided by the expert, we also need to decide, optimally, whether we should terminate the selection process at the querying time or continue to the next querying. Our problem formulation involves three different stochastic optimizations (i.e.~sequence of querying times, decision whether to stop or continue at each querying time, final termination time) while the formulation in \cite{DLLV} requires only the single optimization of the final termination time. We would like to emphasize that the optimization of the sequence of querying times and the optimization of the corresponding decision to stop or continue at each querying time is by no means a simple task. It necessitates a careful analysis with original mathematical methodology, constituting the main contribution of our work.

Finally, in \cite{AGKK} classical information is augmented with machine learned advice. The goal is to assure an \textit{asymptotic} performance guarantee of the value maximization version of the secretary problem (where one is interested in the actual value of the selection and not the order). As in the previous reference, there are no queries to an expert present and as we pointed out the analysis is asymptotic with no \textit{exact} (non-asymptotic) optimality results as in our work.

\section{Background}\label{sec:2}
We begin by formally introducing the problem of interest along with our notation.
Suppose the set $\{\xi_1,\ldots,\xi_n\}$ contains objects that are random uniform draws without replacement from the set of integers $\{1,\ldots,n\}$. The sequence $\{\xi_t\}_{t=1}^n$ becomes available sequentially and \textit{we are interested in identifying the object with value equal to 1}, which is regarded as ``the best''. The difficulty of the problem stems from the fact that the value $\xi_t$ is not observable. Instead, at each time $t$, we observe the relative rank $z_t$ of the object $\xi_t$ after it is compared to all the previous objects $\{\xi_1,\ldots,\xi_{t-1}\}$. If $z_t=m$ (where $1\leq m\leq t$) this signifies that in the set $\{\xi_1,\ldots,\xi_{t-1}\}$ there are $m-1$ objects with values strictly smaller than $\xi_t$. As mentioned, at each time $t$ we are interested in deciding between $\{\xi_t=1\}$ and $\{\xi_t>1\}$ based on the information provided by the relative ranks $\{z_1,\ldots,z_t\}$.

Consider now the existence of an expert we may query. The purpose of querying at any time $t$ is to obtain from the expert the information about the \textit{current} object being the best or not. Unlike all articles in the literature that treat the case of deterministic expert responses here, as mentioned in the Introduction, we adopt a \textit{random} response model. In the deterministic case the expert provides the exact answer to the question of interest and we obviously terminate the search if the answer is ``$\{\xi_t=1\}$''. In our approach the corresponding response is assumed to be a \textit{random} number $\zeta_t$ that can take $M$ different values. The reason we allow more than two values is to model the possibility of different levels of confidence in the expert response. Without loss of generality we may assume that $\zeta_t\in\{1,\ldots,M\}$ and the probabilistic model we adopt is the following
\begin{equation}
\Pro(\zeta_t=m|\xi_t=1)=\p(m),~~\Pro(\zeta_t=m|\xi_t>1)=\q(m),~m=1,\ldots,M,
\label{eq:rand-model}
\end{equation}
where $\sum_{m=1}^M\p(m)=\sum_{m=1}^M\q(m)=1$, to assure that the expert responds with probability 1. 
These probabilities are considered \textit{prior information known to us} and will aid us in making optimal use of the expert responses.
As we can see, the probability of the expert generating a specific response depends on whether the true object value is 1 or not. Additionally, we assume that $\zeta_t$ 
\textit{is statistically independent of any other past or future responses, relative ranks and object values}
and, as we can see from our model, only depends on $\xi_t$ being equal or greater than 1.
It is clear that the random model is more general than its deterministic counterpart. Indeed we can emulate the deterministic case by simply selecting $M=2$ and $\p(1)=1,\p(2)=0,\q(1)=0,\q(2)=1$, with $\zeta_t=1$ corresponding to ``$\{\xi_t=1\}$'' and $\zeta_t=2$ to ``$\{\xi_t>1\}$'' with probability 1.

In the case of deterministic responses it is evident that we gain no extra information by querying the expert more than once per object (the expert simply repeats the same response). Motivated by this observation we adopt the same principle for the random response model as well, namely, \textit{we allow at most one query per object}. Of course, we must point out that under the random response model, querying multiple times for the same object makes perfect sense since the corresponding responses may be different. However, as mentioned, we do not allow this possibility in our current analysis. We discuss this point further in Remark\,\ref{rem:5} at the end of Section\,\ref{sec:3}.

We study the case where we have available a maximal number $K$ of queries. This means that for the selection process we need to define the querying times $\T_1,\ldots,\T_K$ with $\T_K>\T_{K-1}>\cdots>\T_1$ (inequalities are strict since we are allowed to query at most once per object) and a final time $\Tf$ where we necessarily terminate the search. It is understood that when $\Tf$ occurs, if there are any remaining queries, we simply discard them. As we pointed out, in the classical case of an infallible expert, when the expert informs that the current object is the best we terminate the search while in the opposite case we continue with the next object. Under the random response model stopping at a querying time or continuing the search requires a more sophisticated decision mechanism. For this reason, with each querying time $\T_k$ we associate a \textit{decision function} $\D_{\T_k}\in\{0,1\}$ where $\D_{\T_k}=1$ means that we terminate the search at $\T_k$ while $\D_{\T_k}=0$ that we continue the search beyond $\T_k$. 
Let us now summarize our components: The search strategy is comprised of the querying times $\T_1,\ldots,\T_K$, the final time $\Tf$ and the decision functions $\D_{\T_1},\ldots,\D_{\T_K}$, which need to be properly optimized. Before starting our analysis let us make the following important remarks. 

\begin{remark}\label{rem:1}
It makes no sense to query or final-stop the search at any time $t$ if we do not observe $z_t=1$. Indeed, since our goal is to capture the object $\xi_t=1$, if this object occurs at $t$ then it forces the corresponding relative rank $z_t$ to become 1. 
\end{remark}

\begin{remark}\label{rem:2}
If we have queried at times $t_k>\cdots>t_1$ and there are still queries available (i.e.~$k<K$), then we have the following possibilities: 1)~Terminate the search at $t_k$;  2)~Make another query after $t_k$; and 3)~Terminate the search after $t_k$ without making any additional queries. Regarding case 3) we can immediately dismiss it from the possible choices. Indeed, if we decide to terminate at some point $t>t_k$, then it is understandable that \textit{our overall performance will not change} if at $t$ we first make a query, \textit{ignore} the expert response and then terminate our search. Of course, if we decide to use the expert response \textit{optimally} then we cannot perform worse than terminating at $t$ without querying. Hence, if we make the $k$-th query at $t_k$ it is preferable to obtain the expert response $\zeta_{t_k}$ and use it to decide whether we should terminate at $t_k$ or make another query after $t_k$. Of course if $k=K$, namely, if we have exhausted all queries, then we decide between terminating at $t_K$ and employing the final time $\Tf$ to terminate after $t_K$. We thus conclude that $\Tf>\T_K>\cdots>\T_1$.
\end{remark}

\begin{remark}\label{rem:3}
Based on the previous remarks we may now specify the information each search component is related to. Denote by $\cZ_t=\sigma\{z_1,\ldots,z_t\}$ the sigma-algebra generated by the relative ranks available at time $t$ and let $\cZ_0$ be the trivial sigma-algebra.
We then have that the querying time $\T_1$ is a $\{\cZ_t\}_{t=0}^n$-adapted stopping time where $\{\cZ_t\}_{t=0}^n$ denotes the filtration generated by the sequence of the corresponding sigma-algebras. Essentially, this means that the events $\{\T_1=t\},\{\T_1>t\},\{\T_1\leq t\}$ belong to $\cZ_t$. More generally, suppose we fix $\T_{k}=t_k,\ldots,\T_1=t_1,\T_0=t_0=0$ and for $t>t_k$ we define $\cZ_t^k=\sigma\{z_1,\ldots,z_t,\zeta_{t_1},\ldots,\zeta_{t_k}\}$ with $\cZ_t^0=\cZ_t$ then, the querying time $\T_{k+1}$ is a $\{\cZ_t^k\}_{t=t_k+1}^n$-adapted stopping time where  $\{\cZ_t^k\}_{t=t_k+1}^n$ denotes the corresponding filtration. Indeed we can see that the event $\{\T_{k+1}=t\}$ depends on the relative ranks $\cZ_t$ but also on the expert responses $\{\zeta_{t_1},\ldots,\zeta_{t_k}\}$ available at time $t$. If we apply this definition for $k=K$ then $\T_{K+1}$ simply denotes the final time $\Tf$. With the first $k$ querying times fixed as before, we can also see that the decision function $\D_{t_k}$ is measurable with respect to $\cZ_{t_k}^k$ (and, therefore, also with respect to $\cZ_t^k$ for any $t\geq t_k$). This is true because at $t_k$ in order to decide whether to stop or continue the search we use all of the information available at time $t_k$ which consists of the relative ranks and the existing expert responses (including, as pointed out, $\zeta_{t_k}$).
\end{remark}

We begin our analysis by presenting certain basic probabilities. They are listed in the following lemma.

\begin{lemma}\label{lem:1}
For $n\geq t\geq t_1>0$ we have
\begin{align}
&\Pro(\xi_t,\ldots,\xi_1)=\frac{(n-t)!}{n!},~~~\Pro(\cZ_t)=\frac{1}{t!},~~~\Pro(z_t|\cZ_{t-1})=\frac{1}{t}\label{eq:lem1-1}\\
&\Pro(\xi_{t_1}=1,\cZ_t)=\frac{1}{(t-1)!n}\ind{z_{t_1}=1}\1_{{t_1}+1}^t,\label{eq:lem1-3}\\
&\Pro(\xi_t=1,\cZ_t)=\frac{1}{(t-1)!n}\ind{z_t=1},\label{eq:lem1-2}
\end{align}
where $\mathbbm{I}_A$ denotes the indicator function of the event $A$ and where for $b\geq a$ we define $\1_a^b=\prod_{\ell=a}^b\ind{z_\ell>1}$, while for $b<a$ we let $\1_a^b=1$.
\end{lemma}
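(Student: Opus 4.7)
The plan is to reduce all five identities to elementary counting arguments based on the fact that $(\xi_1,\ldots,\xi_n)$ is a uniformly random permutation of $\{1,\ldots,n\}$, so every specific ordering occurs with probability $1/n!$. First I would establish \eqref{eq:lem1-1}: the formula for $\Pro(\xi_t,\ldots,\xi_1)$ is immediate, since fixing the first $t$ samples leaves $(n-t)!$ arrangements of the remaining values, yielding $(n-t)!/n!$. For $\Pro(\cZ_t)$ I would count the permutations consistent with a given rank sequence $(z_1,\ldots,z_t)$: one chooses which $t$ of the $n$ values appear in the first $t$ positions ($\binom{n}{t}$ choices, after which the rank pattern uniquely orders them), and the remaining $n-t$ values fill positions $t+1,\ldots,n$ in any of $(n-t)!$ orders. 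This gives $\binom{n}{t}(n-t)!/n!=1/t!$. The conditional formula $\Pro(z_t\mid\cZ_{t-1})=1/t$ then follows from the ratio $(1/t!)/(1/(t-1)!)$, since by the same counting argument the marginal rank pattern on the first $t-1$ samples has probability $1/(t-1)!$.

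Next I would treat \eqref{eq:lem1-2} and \eqref{eq:lem1-3} together, proving \eqref{eq:lem1-3} and recovering \eqref{eq:lem1-2} as the special case $t_1=t$ (with the convention $\1_{t+1}^t=1$). The main subtlety, which I expect to be the only nontrivial step, is to translate the event $\{\xi_{t_1}=1\}$ into constraints on the rank sequence. The key observation is that $\xi_{t_1}=1$ forces position $t_1$ to hold the minimum of $\{\xi_1,\ldots,\xi_t\}$, which happens precisely when $z_{t_1}=1$ (so that $\xi_{t_1}$ is the minimum of the first $t_1$ samples) together with $z_s>1$ for every $t_1<s\leq t$ (otherwise some later $\xi_s$ would be strictly smaller than $\xi_{t_1}$, contradicting $\xi_{t_1}=1$). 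These two requirements are captured exactly by the indicator factor $\ind{z_{t_1}=1}\1_{t_1+1}^t$.

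With the rank-pattern constraints identified, the remaining counting is routine: the $t$ values appearing in the first $t$ positions must include $1$, giving $\binom{n-1}{t-1}$ admissible subsets whose internal order is dictated by the rank pattern, while the remaining $n-t$ values populate positions $t+1,\ldots,n$ in any of $(n-t)!$ ways. Dividing the resulting count by $n!$ yields $\binom{n-1}{t-1}(n-t)!/n!=1/(n(t-1)!)$, which matches the prefactor in \eqref{eq:lem1-3}. Setting $t_1=t$ and noting that the empty product $\1_{t+1}^t$ equals $1$ recovers \eqref{eq:lem1-2}, completing the proof.
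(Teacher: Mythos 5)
Your proposal is correct and follows essentially the same route as the paper: both arguments rest on the observation that a fixed admissible rank pattern is generated by exactly one ordering of each $t$-element value subset, giving $\binom{n}{t}\frac{(n-t)!}{n!}=\frac{1}{t!}$ for \eqref{eq:lem1-1}, and both obtain \eqref{eq:lem1-3} by noting that $\{\xi_{t_1}=1\}$ forces the indicator pattern $\ind{z_{t_1}=1}\1_{t_1+1}^t$ and then counting the $\binom{n-1}{t-1}$ admissible value subsets containing $1$. Your explicit remark that \eqref{eq:lem1-2} is the special case $t_1=t$ is also how the paper treats it, so no gaps remain.
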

\begin{proof}
The first equality is well known and corresponds to the probability of selecting uniformly $t$ values from the set $\{1,\ldots,n\}$ without replacement. The second and third equality in \eqref{eq:lem1-1} show that the ranks $\{z_t\}$ are independent and each $z_t$ is uniformly distributed in the set $\{1,\ldots,t\}$. Because the event $\{\xi_{t_1}=1\}$ forces the corresponding rank $z_{t_1}$ to become 1 and all subsequent ranks to be greater than 1, this fact is captured in \eqref{eq:lem1-3} by the product of the indicators $\ind{z_{t_1}=1}\1_{{t_1}+1}^t$. The equality in \eqref{eq:lem1-2} computes the probability of the event $\{\xi_t=1\}$ in combination with the rank values observed up to time $t$. The details of the proof can be found in the Appendix.
\end{proof}

In the next lemma, we present a collection of more advanced equalities as compared to the ones appeared in Lemma\,\ref{lem:1} where we also include expert responses. In these identities we will encounter the event $\{\cZ_t^k,z_{t_k}=1\}$ that has the following meaning: When $t\geq t_k$ then $z_{t_k}$ is part of $\cZ_t$ which in turn is part of $\cZ_t^k$. By including explicitly the event $\{z_{t_k}=1\}$ we simply state that we fix $z_{t_k}$ to 1, while the remaining variables comprising $\cZ_t$ or $\cZ_t^k$ are free to assume any value consistent with the constraints imposed on the relative ranks.
\begin{lemma}\label{lem:2}
For $n\geq t>t_k>\cdots>t_1>0$ we have
\begin{align} \allowdisplaybreaks
&\Pro(\xi_t=1|\cZ_t^k)=\frac{t}{n}\ind{z_t=1}\label{eq:lem2-1} \allowdisplaybreaks\\  \allowdisplaybreaks
&\Pro(\xi_{t_k}=1|\cZ_{t_k}^k)=\frac{\p(\zeta_{t_k})t_k}{\p(\zeta_{t_k})t_k+\q(\zeta_{t_k})(n-t_k)}\ind{z_{t_k}=1}
\label{eq:lem2-2} \allowdisplaybreaks\\ \allowdisplaybreaks
&\Pro(\zeta_{t_k}|\cZ_{t_k}^{k-1},z_{t_k}=1)=\p(\zeta_{t_k})\frac{t_k}{n}+\q(\zeta_{t_k})\Big(1-\frac{t_k}{n}\Big) \label{eq:lem2-3}  \allowdisplaybreaks\\
 &\Pro(z_t=1|\cZ_{t-1}^k,z_{t_k}=1)
=\frac{1}{t}\Big\{1-\frac{\big(\p(\zeta_{t_k})-\q(\zeta_{t_k})\big)(t-1)}{\p(\zeta_{t_k})(t-1)+(n-t+1)\q(\zeta_{t_k})}\1_{t_k+1}^{t-1}\Big\}.\label{eq:lem2-4}\allowdisplaybreaks
 \end{align}
\end{lemma}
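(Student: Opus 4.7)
The plan is to base everything on Bayes' rule combined with a decomposition of joint probabilities by conditioning on the position $s\in\{1,\ldots,n\}$ where the best sample (value $1$) sits; the conditional-independence hypothesis of the genie model then collapses the responses neatly. I will use Lemma\,\ref{lem:1} for $s\leq t$ and, for $s>t$, the elementary fact $\Pro(\xi_s=1,\cZ_t)=\frac{1}{n\,t!}$ (when $\{\xi_s=1\}$ with $s>t$ is fixed, $\xi_1,\ldots,\xi_t$ is a uniform sample from $\{2,\ldots,n\}$, so the relative rank pattern is uniform over $t!$ permutations).

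For \eqref{eq:lem2-1} I will write $\Pro(\xi_t=1|\cZ_t^k)$ as a Bayes ratio. Since $\{\xi_t=1\}$ forces $\xi_{t_j}>1$ for every $j\leq k$, the conditional law of the responses in the numerator factors as $\prod_j\q(\zeta_{t_j})$. When expanding the denominator by summing over the position $s$ of $1$, every contribution with $s\leq t-1$ carries an indicator of the form $\1_{s+1}^t$, which contains $\ind{z_t>1}$ and is therefore annihilated by the factor $\ind{z_t=1}$ already present in the numerator. Only $s=t$ and $s>t$ survive and combine via $\frac{1}{(t-1)!n}+\frac{n-t}{n\,t!}=\frac{1}{t!}$ to yield denominator $\frac{1}{t!}\prod_j\q(\zeta_{t_j})$; the $\q(\zeta_{t_j})$ factors cancel and $\frac{t}{n}\ind{z_t=1}$ remains. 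Running the same decomposition at $t=t_k$ yields \eqref{eq:lem2-2}: the $s=t_k$ contribution now survives and supplies an extra $\p(\zeta_{t_k})$ factor, while all $s<t_k$ terms are still killed by the $\ind{z_{t_k}=1}$ in the numerator. Identity \eqref{eq:lem2-3} then follows by total probability, since $\Pro(\zeta_{t_k}|\cZ_{t_k}^{k-1},z_{t_k}=1)$ is a $\p/\q$ mixture weighted by the posterior of $\{\xi_{t_k}=1\}$ under the same conditioning, and this posterior equals $t_k/n$ by the argument just given, applied with only $k-1$ prior queries at $t_1,\ldots,t_{k-1}$ and with $t$ replaced by $t_k$.

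Identity \eqref{eq:lem2-4} will be the hard part. I will write it as the ratio of two joint probabilities, both subject to $z_{t_k}=1$, and decompose once more over the location $s$ of $1$. The numerator also imposes $z_t=1$, so the same cancellation as above collapses it to $\frac{1}{t!}\prod_j\q(\zeta_{t_j})$. For the denominator the constraint $z_{t_k}=1$ kills all $s<t_k$ terms via the $\ind{z_{t_k}>1}$ factor inside $\1_{s+1}^{t-1}$, but $s=t_k$, the positions $s\in(t_k,t-1]$ not coinciding with any $t_j$, and $s\geq t$ all contribute. The key tool will be the combinatorial identity
\[
\1_{t_k+1}^{t-1}+\sum_{s=t_k+1}^{t-1}\ind{z_s=1}\,\1_{s+1}^{t-1}=1,
\]
which expresses the dichotomy ``no rank-one occurs in $(t_k,t-1]$'' versus ``a unique latest one does''. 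Using it to consolidate the $s=t_k$ and $s\in(t_k,t-1]$ contributions reduces the denominator to a single expression indexed by $\1_{t_k+1}^{t-1}$, and routine algebra (best checked separately in the two cases $\1_{t_k+1}^{t-1}\in\{0,1\}$) produces the displayed formula. The main obstacle, as anticipated, is this bookkeeping: correctly identifying which terms are killed by $z_{t_k}=1$ (in the denominator) versus by $z_t=1$ (only in the numerator), and consolidating the surviving terms via the identity above.
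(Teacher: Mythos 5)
Your proof is correct, but it takes a genuinely different route from the paper's. The paper never decomposes over the location of the best sample: it peels the genie responses off one at a time, showing for each identity that splitting on $\{\xi_{t_j}=1\}$ versus $\{\xi_{t_j}>1\}$ extracts a common factor $\p(\zeta_{t_j})$ or $\q(\zeta_{t_j})$ from numerator and denominator (the complementary term vanishing because two samples cannot both equal $1$, or because $z_t=1$ with $t>t_j$ excludes $\xi_{t_j}=1$), so the older responses cancel and everything reduces to the rank-only quantities of Lemma\,\ref{lem:1}. You instead sum the joint probability over the position $s$ of the value $1$ and factor all $k$ responses at once via the conditional-independence assumption; I checked that your extra ingredient $\Pro(\xi_s=1,\cZ_t)=\frac{1}{n\,t!}$ for $s>t$ is right, that the telescoping identity $\1_{t_k+1}^{t-1}+\sum_{s=t_k+1}^{t-1}\ind{z_s=1}\1_{s+1}^{t-1}=1$ holds, and that the resulting ratio for \eqref{eq:lem2-4}, namely $n\q(\zeta_{t_k})/\big(t[(t-1)(\p(\zeta_{t_k})-\q(\zeta_{t_k}))\1_{t_k+1}^{t-1}+n\q(\zeta_{t_k})]\big)$, agrees with the stated formula in both cases $\1_{t_k+1}^{t-1}\in\{0,1\}$. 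The paper's route buys short, local conditional-probability manipulations and evaluates the denominator of \eqref{eq:lem2-4} directly from \eqref{eq:lem1-1} and \eqref{eq:lem1-3}; your route buys a single uniform mechanism (a Bayes computation over where the best candidate sits) that yields all four identities, makes it transparent why only the most recent response survives (every earlier query position is excluded by $z_{t_k}=1$), and gives \eqref{eq:lem2-3} essentially for free from the posterior $t_k/n$. The price is the heavier bookkeeping you anticipate, which you have organized correctly.
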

\begin{proof}
Equality \eqref{eq:lem2-1} expresses the fact that the probability of interest depends only on the current rank while it is independent of previous ranks and expert responses. Equalities \eqref{eq:lem2-2}, \eqref{eq:lem2-3}, \eqref{eq:lem2-4} suggest that the corresponding probabilities are functions of only the most recent expert response and do not depend on the previous responses. In particular in \eqref{eq:lem2-4} we note the dependency structure that exists between $z_t$ and past information which is captured by the indicator $\1_{t_k+1}^{t-1}$. As we argue in the proof of Lemma\,\ref{lem:1}, this indicator is a result of the fact that if $\xi_{t_k}=1$ then the ranks for times $t>t_k$ \textit{can no longer assume the value 1}. 
The complete proof is presented in the Appendix.
\end{proof}

As we will see in the subsequent analysis, compared to the deterministic case, a more challenging decision structure will emerge under the random response model. In the deterministic model \cite{GM,LMM}, deciding to stop at a querying time is straightforward. If the expert responds with ``$\{\xi_t=1\}$'' we stop, otherwise we continue our search. This strategy is not the optimum in the case of random responses since the expert does not necessarily provide binary responses and, more importantly, its responses can be faulty. As we are going to show, the optimal decision functions $\D_{\T_1},\ldots,\D_{\T_K}$ have a more intriguing form which depends on the values of the expert response and their corresponding probabilities of occurrence. Identifying the optimal search components will be our main task in the next section.

\section{Optimizing the Success Probability}\label{sec:3}
To simplify our presentation we make a final definition. For $t_k>t_{k-1}>\cdots>t_1>t_0=0$, we define the event
$$
\cBk=\{z_{t_k}=1,\D_{t_k}=0,\ldots,z_{t_1}=1,\D_{t_1}=0\},
$$
with $\cB_{t_1}^{t_0}=\cB_{t_1}^0$ denoting the whole sample space. From the definition we conclude
\begin{equation}
\I_{\cBk}=\prod_{\ell=1}^k\ind{z_{t_\ell}=1}\ind{\D_{t_\ell}=0}=\ind{z_{t_k}=1}\ind{\D_{t_k}=0}\I_{\cB_{t_1}^{t_{k-1}}}.
\label{eq:BandD}
\end{equation}
Basically $\cBk$ captures the event of querying at $t_k,\ldots,t_1$, after observing relative ranks equal to 1 (required by Remark\,\ref{rem:1}) while deciding not to terminate the search at any of these querying instances. It is clear from Remark\,\ref{rem:3} that for $t\geq t_k$ the indicator $\I_{\cBk}$ is measurable with respect to $\cZ_t^k$, because this property applies to each individual indicator participating in the product in \eqref{eq:BandD}. 

Consider now a collection of querying times and a final time satisfying $\Tf>\T_K>\cdots>\T_2>\T_1>\T_0=0$ and a corresponding collection of decision functions $\D_{\T_1},\ldots,\D_{\T_K}$ all conforming with Remark\,\ref{rem:3}. Denote with $\Psucc$ the success probability delivered by this combination, namely the probability to select the best object, then 
\begin{equation} 
\Psucc=\sum_{k=1}^K\Pro(\xi_{\T_k}=1,\D_{\T_k}=1,\cB_{\T_1}^{\T_{k-1}})+ 
\Pro(\xi_{\Tf}=1,\cB_{\T_1}^{\T_K}),
\label{eq:decomp0}
\end{equation} 
where, as usual, for any sequence $\{x_n\}$ we define $\sum_{k=a}^b x_n=0$ when $b<a$.
The general term in the sum expresses the probability of the event where we did not terminate at the first $k-1$ querying times (this is captured by $\cB_{\T_1}^{\T_{k-1}}$ which contains the event of all previous decisions being 0) and we decided to terminate at the $k$-th query (indicated by $\D_{\T_k}=1$). The single last term in \eqref{eq:decomp0} is the probability of the event where we did not terminate at any querying time (captured by $\cB_{\T_1}^{\T_K}$) and we make use of the final time $\Tf$ to terminate the search. Please note, that in \eqref{eq:decomp0} we did not include the events $\{z_{\Tf}=1\},\{z_{\T_k}=1\}$ although, as pointed out in Remark\,\ref{rem:1}, we query or final-stop only at points that must satisfy this property. This is because these events are implied by the events $\{\xi_{\T_k}=1\}$ and $\{\xi_{\Tf}=1\}$ respectively, since $\{\xi_{\Tf}=1\}\cap\{z_{\Tf}=1\}=\{\xi_{\Tf}=1\}$, with a similar equality being true for any querying time.
Let us now focus on the last term in \eqref{eq:decomp0} and apply the following manipulations
\begin{align}\allowdisplaybreaks 
&\Pro(\xi_{\Tf}=1,\cB_{\T_1}^{\T_K})
=\mathop{\sum^{n}\cdots\sum^{n}}_{t>t_K>\cdots>t_1>0}\!\!\!\Pro(\xi_{t}=1,\Tf=t,\T_K=t_K,\ldots,\T_1=t_1,\cB_{t_1}^{t_K})\nonumber\allowdisplaybreaks\\ \allowdisplaybreaks
&~~=\mathop{\sum^{n}\cdots\sum^{n}}_{t>t_K>\cdots>t_1>0}\!\!\!\Exp[\ind{\xi_{t}=1}\ind{\Tf=t}\ind{\T_K=t_k}\cdots\ind{\T_1=t_1}\I_{\cB_{t_1}^{t_K}}]\nonumber\allowdisplaybreaks\\ \allowdisplaybreaks
&~~=\mathop{\sum^{n}\cdots\sum^{n}}_{t>t_K>\cdots>t_1>0}\!\!\!\Exp[\Exp[\ind{\xi_{t}=1}|\cZ_t^K]\ind{\Tf=t}\ind{\T_K=t_k}\cdots\ind{\T_1=t_1}\I_{\cB_{t_1}^{t_K}}]\nonumber\allowdisplaybreaks\\ \allowdisplaybreaks
&~~=\mathop{\sum^{n}\cdots\sum^{n}}_{t>t_K>\cdots>t_1>0}\!\!\!\Exp[\Pro(\xi_{t}=1|\cZ_t^K)\ind{\Tf=t}\ind{\T_K=t_k}\cdots\ind{\T_1=t_1}\I_{\cB_{t_1}^{t_K}}]\nonumber\allowdisplaybreaks\\ \allowdisplaybreaks
&~~=\mathop{\sum^{n}\cdots\sum^{n}}_{t>t_K>\cdots>t_1>0}\!\!\!\Exp\Big[\frac{t}{n}\ind{z_t=1}\ind{\Tf=t}\ind{\T_K=t_k}\cdots\ind{\T_1=t_1}\I_{\cB_{t_1}^{t_K}}\Big]\nonumber\allowdisplaybreaks\\ \allowdisplaybreaks
&~~=\Exp\Big[\frac{\Tf}{n}\ind{z_{\Tf=1}}\I_{\cB_{\T_1}^{\T_{K}}}\Big],\!
\label{eq:mbifla1}\allowdisplaybreaks
\end{align} 
where for the third equality we used the fact that the indicators $\I_{\cB_{t_1}^{t_K}},\ind{\Tf=t},\ind{\T_k=t_k},k=1,\ldots,K$, are measurable with respect to $\cZ_t^K$ and can therefore be placed outside the inner expectation, while for the second last equality we used \eqref{eq:lem2-1}. If we 
substitute \eqref{eq:mbifla1} into \eqref{eq:decomp0} we can rewrite the success probability as
\begin{equation} 
\Psucc=\sum_{k=1}^K\Pro(\xi_{\T_k}=1,\D_{\T_k}=1,\cB_{\T_1}^{\T_{k-1}})+
\Exp\Big[\frac{\Tf}{n}\ind{z_{\Tf}=1}\I_{\cB_{\T_1}^{\T_{K}}}\Big].
\label{eq:decomp}
\end{equation} 

We can now continue with the task of optimizing \eqref{eq:decomp} over all querying times $\T_1,\ldots,\T_K$, the final time $\Tf$ and the decision functions $\D_{\T_1},\ldots,\D_{\T_K}$. We will achieve this goal step-by-step. We start by conditioning on $\{\T_K=t_K,\ldots,\T_1=t_1,\cB_{t_1}^{t_K}\}$ and first optimize over $\Tf>t_K$, followed by a second optimization over $\D_{\T_K}$. This will result in an expression that depends on $\T_1,\ldots,\T_K$ and $\D_{\T_1},\ldots,\D_{\T_{K-1}}$ with a form which will turn out to be similar to \eqref{eq:decomp} but with the sum reduced by one term. Continuing this idea of first conditioning on the previous querying times and the corresponding event $\cB$ we are going to optimize successively over the pairs $(\Tf,\D_{\T_K}),(\T_K,\D_{\T_{K-1}}),\ldots,(\T_2,\D_{\T_1})$ and then, finally, over $\T_1$. The outcome of this sequence of \textit{dependent} optimizations is presented in the next theorem which constitutes our main result. As expected, in this theorem we will identify the optimal version of all search components and also specify the overall optimal success probability.
\vspace{-0.15cm}

\begin{theorem} \label{th:1}
For $t=n,n-1,\ldots,1,$ and $k=K,K-1,\ldots,0,$ define recursively in $t$ and $k$ the deterministic sequences $\{\A_t^k\},\{\U_t^k\}$ by
\begin{align}  \allowdisplaybreaks
\A_{t-1}^k&=\A_t^k\Big(1-\frac{1}{t}\Big)+\max\big\{\U_t^{k+1},\A_t^k\big\}\frac{1}{t},\label{eq:th1-1}\\
\U_t^{k}&=\sum_{m=1}^M\max\big\{\p(m)\frac{t}{n},\q(m)\A_t^k\big\},
\label{eq:th1-2}
\end{align}
initializing with $\A_n^k=0$ and $\U_t^{K+1}=\frac{t}{n}$. Then, for any collection of querying times and final time $\T_1<\cdots<\T_K<\Tf$ and any collection of decision functions $\D_{\T_1},\ldots,\D_{\T_K}$ that conform with Remark\,\ref{rem:3}, if we define for $k=K,K-1,\ldots,0$ the sequence $\{\P_k\}$ by
\begin{equation}
\P_k=\sum_{\ell=1}^{k}\Pro(\xi_{\T_\ell}=1,\D_{\T_\ell}=1,\cB_{\T_1}^{\T_{\ell-1}})+
\Exp\big[\U_{\T_{k+1}}^{k+1}\ind{z_{\T_{k+1}}=1}\I_{\cB_{\T_1}^{\T_{k}}}\big],
\label{eq:th1-2.5}
\end{equation}
we have
\begin{equation}
\Psucc=\P_{K}\leq\P_{K-1}\leq\cdots\leq\P_0\leq\A_0^0.
\label{eq:th1-3}
\end{equation}
The upper bound $\A_0^0$ in \eqref{eq:th1-3} is independent of any search strategy and constitutes the maximal achievable success probability. This optimal performance can be attained if we select the querying times according to
\begin{equation}
\T_{k}=\min\big\{t>\T_{k-1}:\U_t^{k}\ind{z_t=1}\geq \A_t^{k-1}
\big\},
\label{eq:th1-4}
\end{equation}
and the decision functions to satisfy
\begin{equation}
\D_{\T_k}=\bigg\{\begin{array}{cl}
1,&\text{if\/}~\p(\zeta_{\T_k})\frac{\T_k}{n}\geq\q(\zeta_{\T_k})\A_{\T_k}^k\\[4pt]
0,&\text{if\/}~\p(\zeta_{\T_k})\frac{\T_k}{n}<\q(\zeta_{\T_k})\A_{\T_k}^k,
\end{array}
\label{eq:th1-5}
\end{equation}
where $\zeta_{\T_k}$ is the response of the expert at querying time $\T_k$.
\end{theorem}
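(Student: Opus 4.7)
The approach is backward induction, establishing the chain $\Psucc = \P_K \le \P_{K-1} \le \cdots \le \P_0 \le \A_0^0$ with equality attained under the optimal strategy \eqref{eq:th1-4}--\eqref{eq:th1-5}. The base case $\P_K = \Psucc$ is immediate: substituting $\U_t^{K+1} = t/n$ into the definition \eqref{eq:th1-2.5} of $\P_K$ recovers \eqref{eq:decomp} exactly. For the inductive step $\P_k \le \P_{k-1}$, the plan is to focus on the two summands that distinguish them, $T_1 := \Pro(\xi_{\T_k}=1,\D_{\T_k}=1,\cB_{\T_1}^{\T_{k-1}})$ and $T_2 := \Exp[\U_{\T_{k+1}}^{k+1}\ind{z_{\T_{k+1}}=1}\I_{\cB_{\T_1}^{\T_k}}]$, and to show that $T_1 + T_2 \le \Exp[\U_{\T_k}^k\ind{z_{\T_k}=1}\I_{\cB_{\T_1}^{\T_{k-1}}}]$ via a two-stage optimization: first the stopping time $\T_{k+1}$, then the decision $\D_{\T_k}$.

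For the first stage, I would use \eqref{eq:BandD} to expand $\I_{\cB_{\T_1}^{\T_k}} = \ind{z_{\T_k}=1}\ind{\D_{\T_k}=0}\I_{\cB_{\T_1}^{\T_{k-1}}}$ and apply the tower property to isolate the conditional expectation $\Exp[\U_{\T_{k+1}}^{k+1}\ind{z_{\T_{k+1}}=1}|\cZ_{\T_k}^k]$ inside $T_2$. This conditional expectation is treated by an optimal stopping argument in which the deterministic recursion \eqref{eq:th1-1} plays the role of the Snell envelope for the residual subproblem on $\{\T_k+1,\ldots,n\}$. Specifically, $\A_{\T_k}^k$ may be identified as the joint probability that the best sample lies in positions $>\T_k$ and is captured by some future stopping strategy using at most $K-k$ additional queries; the envelope bound then reads
\begin{equation*}
\Exp\bigl[\U_{\T_{k+1}}^{k+1}\ind{z_{\T_{k+1}}=1}\bigm|\cZ_{\T_k}^k\bigr] \le \frac{n\,\q(\zeta_{\T_k})\A_{\T_k}^k}{\p(\zeta_{\T_k})\T_k + \q(\zeta_{\T_k})(n-\T_k)},
\end{equation*}
in which the $\zeta_{\T_k}$-dependent prefactor is exactly the posterior $\Pro(\xi_{\T_k}>1|\cZ_{\T_k}^k)$ from \eqref{eq:lem2-2}.

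Once this bound on $T_2$ is in place, combining with $T_1$ via \eqref{eq:lem2-2} produces an integrand whose dependence on $\D_{\T_k}$ is a pointwise maximum over a binary choice; the corresponding optimizer is the threshold rule \eqref{eq:th1-5} after cancelling the common denominator. Integrating the resulting maximum over $\zeta_{\T_k}$ using \eqref{eq:lem2-3}, the posterior weight $\Pro(\zeta_{\T_k}=m|\cZ_{\T_k}^{k-1},z_{\T_k}=1) = [\p(m)\T_k + \q(m)(n-\T_k)]/n$ cancels the denominator, collapsing the stop reward to $\p(m)\T_k/n$ and the continue reward to $\q(m)\A_{\T_k}^k$; summing over $m$ recovers $\U_{\T_k}^k$ in \eqref{eq:th1-2} and completes $\P_k \le \P_{k-1}$. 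The final step $\P_0 \le \A_0^0$ is the cleanest case: since no genie responses precede $\T_1$, the transition $\Pro(z_t = 1|\cZ_{t-1}) = 1/t$ from \eqref{eq:lem1-1} applies verbatim and the recursion \eqref{eq:th1-1} is literally the textbook Snell envelope for the reward $\U_t^1\ind{z_t=1}$. Equality throughout is realized by \eqref{eq:th1-4}--\eqref{eq:th1-5} as they are the pointwise optimizers at every level.

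The principal technical obstacle is the Snell envelope bound of the second paragraph. Although \eqref{eq:th1-1} has the classical Bellman form with transition probability $1/t$, the true conditional distribution of $z_t$ given $\cZ_{t-1}^k$ and $z_{t_k}=1$ is the non-uniform expression \eqref{eq:lem2-4} that depends on the past genie response $\zeta_{t_k}$ through the indicator $\1_{t_k+1}^{t-1}$, so a direct pointwise induction with the deterministic transition fails. I expect the resolution to proceed by first conditioning on the sub-event $\{\xi_{\T_k} > 1\}$ (equivalently, ``the best candidate still lies in positions $>\T_k$''), under which the future relative ranks inherit a cleaner distribution compatible with the $1/t$ transition; multiplication by $\Pro(\xi_{\T_k}>1|\cZ_{\T_k}^k)$ from \eqref{eq:lem2-2} then reassembles the claimed $\zeta_{\T_k}$-dependent envelope.
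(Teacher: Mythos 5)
Your overall architecture coincides with the paper's: the same base case, the same backward induction $\P_k\leq\P_{k-1}$ carried out by first optimizing $\T_{k+1}$ and then $\D_{\T_k}$, the same use of \eqref{eq:lem2-2} to express the stop reward, and the same averaging over $\zeta_{\T_k}$ via \eqref{eq:lem2-3} to recover $\U_{\T_k}^k$. You also correctly isolate the one genuinely hard step, namely the envelope bound
\[
\Exp\big[\U_{\T_{k+1}}^{k+1}\ind{z_{\T_{k+1}}=1}\,\big|\,\cZ_{\T_k}^k,z_{\T_k}=1\big]\;\leq\;\A_{\T_k}^k\,\frac{n\,\q(\zeta_{\T_k})}{\T_k\p(\zeta_{\T_k})+(n-\T_k)\q(\zeta_{\T_k})}.
\]
But your proposed resolution of that step fails as written, for two concrete reasons. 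First, conditioning on $\{\xi_{\T_k}>1\}$ does \emph{not} restore the uniform $1/t$ transition: a computation parallel to \eqref{eq:lem2-4} gives $\Pro(z_t=1|\cZ_{t-1},z_{t_k}=1,\xi_{t_k}>1)=\frac{1}{t}\big(1-\frac{t-1}{n}\1_{t_k+1}^{t-1}\big)^{-1}$, which still carries the indicator $\1_{t_k+1}^{t-1}$, so the recursion \eqref{eq:th1-1} is not the textbook Snell envelope for the conditioned subproblem either. Second, the arithmetic of your reassembly does not close: the posterior from \eqref{eq:lem2-2} is $\Pro(\xi_{\T_k}>1|\cZ_{\T_k}^k)=\frac{\q(\zeta_{\T_k})(n-\T_k)}{\p(\zeta_{\T_k})\T_k+\q(\zeta_{\T_k})(n-\T_k)}$, not the prefactor $\frac{n\q(\zeta_{\T_k})}{\p(\zeta_{\T_k})\T_k+\q(\zeta_{\T_k})(n-\T_k)}$ you identify it with; multiplying it by $\A_{\T_k}^k$ leaves you short by the factor $\frac{n}{n-\T_k}$. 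The conditional continuation value would have to be $\frac{n}{n-\T_k}\A_{\T_k}^k$, and proving that is precisely the difficulty you have displaced rather than removed.

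The paper resolves this by staying with the unconditioned (observable) filtration and carrying an explicit correction term through the backward induction: it shows the continuation value equals $\A_t^k+\B_t^k(\zeta_{t_k})\1_{t_k+1}^t$, where $\{\B_t^k(m)\}$ obeys its own backward recursion driven by \eqref{eq:lem2-4}, and then identifies the closed form $\B_t^k(m)=\A_t^k\,\frac{t(\q(m)-\p(m))}{t\p(m)+(n-t)\q(m)}$, from which both the stated bound and the $\zeta_{\T_k}$-independence of the querying rule \eqref{eq:th1-4} follow (the correction term vanishes on $\{z_t=1\}$, a point your sketch also leaves unaddressed). Some device of this kind --- an induction hypothesis that explicitly tracks the $\1_{t_k+1}^t$-dependent part of the value function --- is what your proposal is missing; the remaining steps (the $\D_{\T_k}$ optimization, the cancellation via \eqref{eq:lem2-3}, and the final optimization over $\T_1$) are as in the paper and present no difficulty.
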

\begin{proof}
As mentioned, Theorem\,\ref{th:1} constitutes our main result because we identify the optimal version of all the search components and the corresponding maximal success probability. In particular, we have \eqref{eq:th1-4} for the optimal querying times $\T_1,\ldots,\T_K$ and final time $\Tf$ (we recall that $\Tf=\T_{K+1}$), while the optimal version of the decision functions is depicted in \eqref{eq:th1-5}. The complete proof is presented in the Appendix.
\end{proof}

\section{Simplified Form of the Optimal Components}
Theorem\,\ref{th:1} offers explicit formulas for the optimal version of all search components. We recall that in the existing literature, querying and final stopping are defined in terms of very simple rules involving thresholds. For this reason in this section our goal is to develop similar rules for our search strategy. The next lemma identifies certain key monotonicity properties of the sequences introduced in Theorem\,\ref{th:1} that will help us achieve this goal.

\begin{lemma}\label{lem:3}
For fixed $t$ the two sequences $\{\A_t^k\}$ and $\{\U_t^k\}$ are decreasing in $k$, while for fixed $k$ we have
$\{\A_t^k\}$ decreasing and $\{\U_t^k\}$ increasing in $t$. Finally, at the two end points we observe $\A_n^k\leq\U_n^{k+1}$ and $\A_0^k\geq\U_0^{k+1}$.
\end{lemma}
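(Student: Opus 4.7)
The strategy is to establish the four claims in an order in which each subsequent argument can rely on the previous ones: first monotonicity in $k$ for both sequences, then $\A_t^k$ decreasing in $t$, then the hard step which is $\U_t^k$ increasing in $t$, and finally the two endpoint inequalities.

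For monotonicity in $k$ at fixed $t$, I would use backward induction on $t$, starting from $t=n$, where $\A_n^k=0$ and $\U_n^k=\sum_m\p(m)=1$ are both constant in $k$. The induction step rests on two simple observations drawn directly from \eqref{eq:th1-1}--\eqref{eq:th1-2}: if $\A_t^k$ is decreasing in $k$, then each summand $\max\{\p(m)t/n,\q(m)\A_t^k\}$ is decreasing in $k$, hence so is $\U_t^k$; and once both $\A_t^k$ and $\U_t^k$ (and therefore $\U_t^{k+1}$) are decreasing in $k$, the convex-combination form of \eqref{eq:th1-1} transfers the property to $\A_{t-1}^k$. One boundary check is $\U_t^{K+1}=t/n\le \U_t^K$, which holds because each summand of $\U_t^K$ dominates $\p(m)t/n$ and $\sum_m\p(m)=1$.

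For monotonicity in $t$, the recursion \eqref{eq:th1-1} rearranges as $\A_{t-1}^k=\A_t^k+(\U_t^{k+1}-\A_t^k)_+/t$, so $\A_{t-1}^k\ge\A_t^k$ is immediate. The genuinely nontrivial step is $\U_{t-1}^k\le\U_t^k$, where within a single summand the two ingredients move in opposite directions: $\p(m)t/n$ grows with $t$ while $\A_t^k$ shrinks. My plan is to split on whether $\U_t^{k+1}\le\A_t^k$ or not. In the former subcase $\A_{t-1}^k=\A_t^k$, and the inequality is termwise in $m$. In the latter, $\A_{t-1}^k=(1-1/t)\A_t^k+(1/t)\U_t^{k+1}$ is a convex combination; combining this with the identity $\p(m)(t-1)/n=(1-1/t)\,\p(m)t/n$ and the elementary inequality $\max\{(1-\lambda)a_1+\lambda a_2,(1-\lambda)b_1+\lambda b_2\}\le(1-\lambda)\max\{a_1,b_1\}+\lambda\max\{a_2,b_2\}$ applied termwise yields $\U_{t-1}^k\le(1-1/t)\U_t^k+(1/t)\U_t^{k+1}$. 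The already-established $\U_t^{k+1}\le\U_t^k$ then closes the loop.

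The endpoint inequalities come essentially for free: $\A_n^k=0\le 1=\U_n^{k+1}$ at the top, and at the bottom $\U_0^{k+1}=\sum_m\max\{0,\q(m)\A_0^{k+1}\}=\A_0^{k+1}\le\A_0^k$ by the monotonicity in $k$ already proved. The main obstacle in this plan is clearly the convex-combination argument used to establish $\U_t^k$ increasing in $t$; everything else is either direct from the recursion or a consequence of that step together with the simpler monotonicity in $k$.
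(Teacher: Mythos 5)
Your proposal is correct and follows essentially the same route as the paper's proof: backward induction for the monotonicity in $k$ (the paper nests an induction in $t$ inside one in $k$, while you run a single backward induction in $t$, but the content is identical), the one-line rearrangement of \eqref{eq:th1-1} for the decrease of $\A_t^k$ in $t$, the convexity of the max combined with $\sum_m\q(m)=1$ for the increase of $\U_t^k$ in $t$, and the same endpoint computations $\U_n^{k+1}=1$ and $\U_0^{k+1}=\A_0^{k+1}$. Your case split on $\U_t^{k+1}\lessgtr\A_t^k$ is a minor repackaging of the paper's uniform bound $\U_{t-1}^k\le(1-\frac1t)\U_t^k+\frac1t\max\{\U_t^{k+1},\A_t^k\}$ followed by the observation $\A_t^k\le\U_t^k$.
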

\begin{proof}
With the help of this lemma we will be able to produce simpler versions of the optimal components. We can find the complete proof in the Appendix.
\end{proof}

Let us use the results of Lemma\,\ref{lem:3} to examine \eqref{eq:th1-4} and \eqref{eq:th1-5}. We note that \eqref{eq:th1-4} can be true only if $z_t=1$. Under this assumption and because of the increase of $\{\U_t^{k}\}$ and decrease of $\{\A_t^{k-1}\}$ with respect to $t$, combined with their corresponding values at the two end points $t=0,n$, we understand that there exists a time threshold $r_k$ such that $\U_t^{k}\geq\A_t^{k-1}$ for $t\geq r_k$ while the inequality is reversed for $t<r_k$. The threshold $r_k$ can be identified beforehand by comparing the terms of the two deterministic sequences $\{\U_t^{k}\},\{\A_t^{k-1}\}$. With the help of $r_k$ we can then equivalently write $\T_k$ as follows
\begin{equation}
\T_k=\min\big\{t\geq\max\{\T_{k-1}+1,r_k\}:~z_t=1\big\},
\label{eq:Tk}
\end{equation}
namely, we make the $k$-th query the first time after the previous querying time $\T_{k-1}$, but no sooner than the time threshold $r_k$, we encounter $z_t=1$. A similar conclusion applies to the final time $\Tf$ where the corresponding threshold is $r_{\rm f}=r_{K+1}$. 

Regarding now \eqref{eq:th1-5}, namely the decision whether to stop at the $k$-th querying time or not, again because of the decrease of $\{\A_t^k\}$ (from Lemma\,\ref{lem:3}) and the increase of $\{\frac{t}{n}\}$ with respect to $t$ and also the fact that $\A_n^k=0$ and $\A_0^k>0$, we can conclude that there exist thresholds $s_k(m),m=1,\ldots,M$, that depend on the expert response $\zeta_{\T_k}=m$ so that $\p(m)\frac{t}{n}\geq \q(m)\A_t^k$ for $t\geq s_k(m)$ while the inequality is reversed when $t<s_k(m)$. The precise definition of $s_k(m)$ is
\begin{equation}
s_k(m)=\min\big\{t>0:~\p(m)\frac{t}{n}\geq \q(m)\A_t^k\big\}.
\label{eq:skm}
\end{equation}
With the help of the thresholds $s_k(m)$ which can be computed beforehand, we can equivalently write the optimal decision as
\begin{equation}
\D_{\T_k}=\bigg\{\begin{array}{cl} 1,&\text{if\/}~\T_k\geq s_k(\zeta_{\T_k})\\[4pt]
0,&\text{if\/}~\T_k< s_k(\zeta_{\T_k}),
\end{array} 
\label{eq:Dk}
\end{equation}
where $\zeta_{\T_k}$ is the expert response at querying time $\T_k$.
In other words if we make the $k$-th query at time $\T_k$ and the expert responds with $\zeta_{\T_k}$ then if $\T_k$ is no smaller than the time threshold $s_k(\zeta_{\T_k})$ we terminate the search. If $\T_k$ is strictly smaller than $s_k(\zeta_{\T_k})$ then we continue to the next query (or final-stop if $k=K$). 

At this point we have identified the optimal version of all components of the search strategy. In Table\,\ref{tab:1} we summarize the 
formulas we need to apply in order to compute the corresponding thresholds and also present the way these thresholds must be employed to implement the optimal search strategy.
\begin{table}
\caption{Optimal Search Strategy.}
\label{tab:1}
\begin{tabular}{l}
\toprule
Let $\A_{n}^k=0$ and $\U_t^{K+1}=\frac{t}{n}$.
For $t=n,n-1,\ldots,1,$ and $k=K,\ldots,1,0,$ compute:\\
\midrule
\addlinespace[2pt]
$\A_{t-1}^k=\A_t^k\Big(1-\frac{1}{t}\Big)+\max\Big\{\U_t^{k+1},\A_t^k\Big\}\frac{1}{t}$\\
$\U_t^k=\sum_{m=1}^M\max\{\p(m)\frac{t}{n},\q(m)\A_t^k\}$\\
\addlinespace[2pt]
then $\A_0^0$ is the maximal success probability.\\
\addlinespace[7pt]
\toprule
For 
$k=1,\ldots,K$, find
the thresholds for the querying times and the final time:\\
\midrule
\addlinespace[2pt]
$r_k=\min\{t>0:\U_t^{k}\geq\A_t^{k-1}\}$\\
\addlinespace[2pt]
$r_{\rm f}=\min\{t>0:\frac{t}{n}\geq\A_t^K\}$.\\
\addlinespace[7pt]
\toprule
With $\T_0=0$, the optimal querying times and the final time are defined by:\\
\midrule
\addlinespace[2pt]
$\T_k=\min\{t\geq\max\{\T_{k-1}+1,r_k\}: z_t=1\}$, if we have not terminated at $\T_{k-1}$\\
$\Tf=\min\{t\geq\max\{\T_{K}+1,r_{\rm f}\}: z_t=1\}$, if we have not terminated at $\T_K$.\\
\addlinespace[7pt]
\toprule
For 
$k=1,\ldots,K$ and $m=1,\ldots,M$, find 
the decision thresholds:\\
\midrule
\addlinespace[2pt]
$s_k(m)=\min\{t>0:\p(m)\frac{t}{n}\geq\q(m)\A_t^k\}$.\\
\addlinespace[7pt]
\toprule
The optimal decision whether to terminate at $\T_k$ or not is defined by:\\
\midrule
\addlinespace[2pt]
For an expert response $\zeta_{\T_k}\in\{1,\ldots,M\}$, stop at $\T_k$ if $\T_k\geq s_k(\zeta_{\T_k})$\\
otherwise proceed to the next query (or final time if $k=K$).\\
\addlinespace[2pt]
\bottomrule
\end{tabular}
\end{table}

\begin{remark}\label{rem:4} Even though it is not immediately evident from the previous analysis, the probabilistic description of the querying times, final time and decision functions enjoy a notable stationarity characteristic (also pointed out in \cite{LM} for the infallible expert case). In particular, the form of the final time $\Tf$ is \textit{independent} of the maximal number $K$ of queries. This means that the threshold $r_{\rm f}$ does not depend on $K$ and it is in fact the same as the unique threshold of the classical secretary problem. The same observation applies to any querying time $\T_{K-k}$ and decision function $\D_{\T_{K-k}}$. Their corresponding thresholds $r_{K-k}$ and $s_{K-k}(m)$ do not depend on $K$ but \textit{only} on $k$. This observation basically suggests that if we have identified the optimal components for some maximal value $K$ and we are interested in decreasing $K$ then we do not need to recompute the components. We simply start from the thresholds of the last querying time and decision function and go towards the first and we stop when we have collected the desired number of components. Similarly, if we increase $K$ then we keep the optimal components computed for the original smaller $K$ and add more components in the beginning by applying the formulas of Table\,\ref{tab:1}.
\end{remark}

\begin{remark}\label{rem:5}
Once more, we would like to emphasize that the search strategy presented in Table\,\ref{tab:1} is the optimum under the assumption that we allow \textit{at most one query per object}. We should however point out that when the expert provides faulty answers it clearly makes sense to query more than once per object in order to improve our trust in the expert responses. Unfortunately the corresponding analysis turns out to be significantly more involved compared to our current results, as one can easily confirm by considering the simple example of $K=2$ queries. For this reason, we believe, this more general setting requires separate consideration.
\end{remark}

\begin{remark}\label{rem:6}
Being able to query does not necessarily guarantee a success probability that approaches 1. This limit is attainable only in the case of an infallible expert. Unfortunately, when responses may be wrong, we can improve the success probability but we can only reach a maximal value which is strictly smaller than 1 even if we query with every object (i.e.~$K=n$). To see this fact consider the extreme case where the expert outputs $M=2$ values with uniform probabilities $\p(1)=\p(2)=\q(1)=\q(2)=0.5$. It is clear that responses under this probabilistic model are completely useless for any number of queries. Hence, we expect the resulting optimal scheme to be equivalent to the classical secretary problem (with no queries) which (see \cite{GM}) enjoys a success probability that approximates the value $e^{-1}\approx0.3679$ for large $n$. Under the probabilistic response model, in order to experience success probabilities that approach 1, we \textit{conjecture} that we must allow multiple queries per object and a maximal number $K$ of queries which exceeds the number of objects, that is, $K>n$. Of course, again, we need to exclude the uniform probability model because it continues to be equivalent to the classical secretary problem with no queries even when multiple queries per object are permitted.
\end{remark}

\begin{remark}\label{rem:7}
One of our reviewers suggested a very interesting alternative to optimally decide whether to stop or continue after each query. Without loss of generality we may assume that the likelihood ratios satisfy $\frac{\p(1)}{\q(1)}\geq\frac{\p(2)}{\q(2)}\geq\cdots\geq\frac{\p(M)}{\q(M)}$. Indeed this is always possible by numbering the expert responses according to the rank of their corresponding likelihood ratios. Clearly a larger ratio implies a higher likelihood for the object to be the best. For combinations of $t$ and $k$ let us define the threshold sequence $\{m_t^k\}$
$$
m_t^k=\left\{\begin{array}{ll}\text{arg}\max_m\Big\{\frac{\p(m)}{\q(m)}\geq\A_t^k\frac{n}{t}\Big\}&\text{when the inequality can be satisfied for some}~m\\
0&\text{when the inequality cannot be satisfied for any}~m.
\end{array}\right.
$$
Suppose now that we have followed the optimal strategy and we are at the $k$-th querying time $\T_k$ with the expert responding with $\zeta_{\T_k}$. We can then propose the following alternative termination rule: Stop when $\zeta_{\T_k}\leq m_{\T_k}^k$ and continue to the next query if $\zeta_{\T_k}> m_{\T_k}^k$. Under the assumption of the monotonicity of the likelihood ratios we can show that the two termination rules, namely the one presented here and the optimal depicted in Table\,\ref{tab:1} produce exactly the same decisions regarding stopping after querying. With the help of the monotonicity properties of $\{\A_t^k\}$ established in Lemma\,\ref{lem:3}, we can also demonstrate that the threshold sequence $\{m_t^k\}$ is non-decreasing in $t$ and $k$.
\end{remark}
%

\section{Numerical Example}
Let us now apply the formulas of Table\,\ref{tab:1} to a particular example. We consider the case of $n=100$ objects where the expert outputs $M=2$ values. This means that the random response model contains the probabilities $\p(m),\q(m),m=1,2$. We focus on the symmetric case $\p(1)=\q(2)$, meaning that $\p(1)=1-\p(2)=1-\q(1)=\q(2)$, which can be parametrized with the help of a single parameter $\p=\p(1)=\q(2)$. We assign to $\p$ the following values $\p=0.5,0.6,0.7,0.8,0.9,0.95,0.98,1$ and allow a maximum of $K=10$ queries in order to observe the effectiveness of the optimal scheme. As mentioned, the case $\p=1$ corresponds to the infallible expert, consequently we expect to match the existing results in the literature. We also note that $\p=0.5$ corresponds to the uniform model therefore expert responses contain no useful information and we expect our scheme to be equivalent to the optimal scheme of the classical secretary problem.

\setlength{\tabcolsep}{2.5pt}
\begin{table}[!h]\scriptsize
\centering
\caption{Thresholds and optimal success probability for $n=100$ objects and $K=10$ queries.}
\label{tab:2}
\renewcommand{\arraystretch}{1.2}
\begin{tabular}{|l|l|rrrrrrrrrr|rrrrrrrrrr|l|}
\hline
\multicolumn{1}{|c|}{$\p$}&\multicolumn{1}{c|}{$r_{\rm f}$}&\multicolumn{10}{c|}{$r_1\div r_{10}$}&\multicolumn{10}{c|}{$s_1(m)\div s_{10}(m)$, $m=1$ top, $m=2$ bottom}&\multicolumn{1}{c|}{$\Psucc$}\\
\hline
0.50&38&1&1&1&1&1&1&1&1&1&1&38&38&38&38&38&38&38&38&38&38&0.3710\\
    &  & & & & & & & & & & &38&38&38&38&38&38&38&38&38&38&\\
\hline
0.60&38&27&27&27&27&27&27&27&27&27&29&27&27&27&27&27&27&27&27&27&27&0.3952\\
    &  & & & & & & & & & & &52&52&52&52&52&52&52&52&52&52&\\
\hline
0.70&38&20&20&20&20&20&20&20&20&22&25&20&20&20&20&20&20&20&20&20&19&0.4568\\
    &  & & & & & & & & & & &66&66&66&66&66&66&66&66&66&66&\\
\hline
0.80&38&14&14&14&14&14&14&15&16&18&24&14&14&14&14&14&14&14&14&14&13&0.5548\\
    &  & & & & & & & & & & &78&78&78&78&78&78&78&78&78&78&\\
\hline
0.90&38&8&8&8&8&9&9&10&12&16&23&8&8&8&8&8&8&8&8&7&6&0.7055\\
    &  & & & & & & & & & & &90&90&90&90&90&90&90&90&90&90&\\
\hline
0.95&38&5&5&5&5&6&7&8&11&15&23&5&5&5&5&5&5&5&4&4&3&0.8173\\
    &  & & & & & & & & & & &95&95&95&95&95&95&95&95&95&95&\\
\hline
0.98&38&2&3&3&3&4&5&7&10&15&23&2&2&2&2&2&2&2&2&2&2&0.9095\\
    &  & & & & & & & & & & &98&98&98&98&98&98&98&98&98&98&\\
\hline
1.00&38&1&1&2&2&3&4&6&10&15&23&1&1&1&1&1&1&1&1&1&1&0.9983\\
    &  & & & & & & & & & & &100&100&100&100&100&100&100&100&100&100&\\
\hline
\end{tabular}
\end{table}
\begin{figure}[t!]
\centering
\includegraphics[scale=0.8]{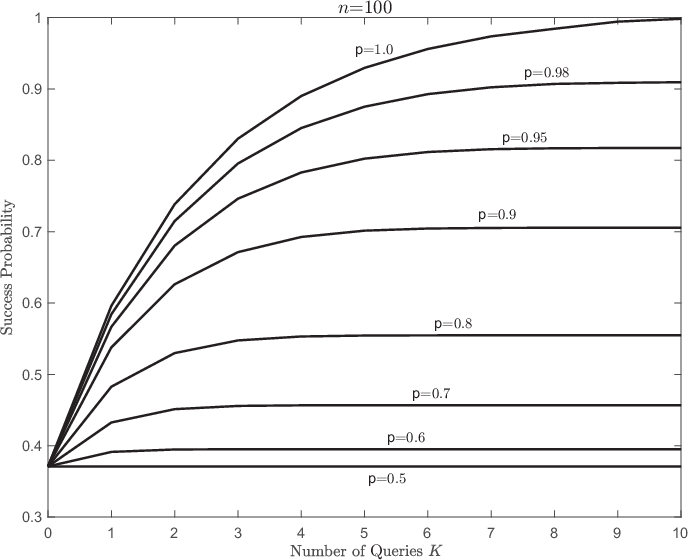}
\caption{Success probability as a function of the number of queries $K$ when $n=100$ objects and $M=2$ responses with symmetric probabilities $\p(1)=1-\p(2)=1-\q(1)=\q(2)=\p$ for $\p=0.5,0.6,0.7,$ $0.8,0.9,0.95,0.98,1$.}
\label{fig:1}
\end{figure}
Using the formulas of Table\,\ref{tab:1} we compute the thresholds $r_{\rm f},r_k,k=1,\ldots,K$ and the decision thresholds $s_k(m),m=1,2,~k=1,\ldots,K$. We can see the corresponding values in Table\,\ref{tab:2} accompanied by the optimal performance delivered by the optimal scheme for $K=10$ queries. In Fig.\,\ref{fig:1} we depict the evolution of the optimal performance for values of $K$ ranging from $K=0$ to $K=10$ where $K=0$ corresponds to the classical secretary problem. Indeed, as we can see from Fig.\,\ref{fig:1} all curves start from the same point which is equal to $\Psucc=0.37104$ namely the maximal success probability in the classical case for $n=100$ (see \cite[Table 2]{GM}). 

In Fig.\,\ref{fig:1} we note the performance of the uniform case $\p=0.5$ which is constant, not changing with the number of queries. As we discussed, this is to be expected since, under the uniform model, expert responses contain no information. It is interesting in this case to compare our optimal scheme to the optimal scheme of the classical version. In the classical case with no queries we recall that the optimal search strategy consists in stopping the first time, but no sooner than $r_{\rm f}=38$, that we observe $z_t=1$. When we allow queries with $\p=0.5$, as we can see, the querying thresholds $r_1\div r_K$ are all equal to 1. This means that the first $K$ times we encounter a relative rank equal to $z_t=1$ we must query. However, stopping at any of the querying times happens only if the querying time is no smaller than $s_k(m)=r_{\rm f}=38$. If all $K$ queries are exhausted before time 38, then we use the terminal time $\T_{\rm f}$ that has a threshold equal to $r_{\rm f}=38$ as well. Consequently final stopping occurs if we encounter a rank equal to 1 no sooner than $r_{\rm f}$. Combining carefully all the possibilities we conclude that we stop at the first time after and including $r_{\rm f}$ that we encounter a rank equal to 1. In other words, we match the classical optimal scheme.

In the last row of Table\,\ref{tab:2} we have the case of an infallible expert. We know that the optimal decision with an infallible expert requires the termination of the search if the expert responds with ``$\{\xi_t=1\}$'' and continuation of the search if the response is ``$\{\xi_t>1\}$''. In our setup, instead, we compare the querying time $\T_k$ to the threshold $s_k(m)$. From the table we see that $s_k(1)=1,~s_k(2)=n$ for all $k=1,\ldots,K$. According to our model $\zeta_{\T_k}=1$ corresponds with certainty (because $\p=1$) to $\xi_{\T_k}=1$, therefore if $\zeta_{\T_k}=1$ we see that we necessarily stop at $\T_k$ since $\T_k\geq s_k(1)=1$. On the other hand, $\zeta_{\T_k}=2$ corresponds with certainty to $\xi_{\T_k}>1$ and when $\zeta_{\T_k}=2$ occurs we can stop only if $\T_k\geq s_k(2)=n$ which is impossible (unless $\T_k=n$ where we necessarily stop since we have exhausted all objects). Therefore, when $\p=1$ our optimal scheme matches the optimal scheme of an infallible expert. This can be further corroborated by comparing our thresholds $r_{\rm f}=38,r_{10}=23$ with the corresponding thresholds $s^*,r^*$ in \cite[Table 3]{GM} and verifying that they are the same with the same success probability (in \cite{GM}, there are tables only for $K=0,1$).

From Fig.\,\ref{fig:1} we can also observe the fact we described in Remark\,\ref{rem:6}, namely that there is an improvement in the success probability, however the optimal value ``saturates'' with the limiting value being strictly less than 1.
An additional conclusion we can draw from this example is that the thresholds also converge to some limiting value. This means that, after some point, increasing $K$ results in repeating the last thresholds and, experiencing the same optimal performance. The only case which does not follow this rule 
and the probability of success converges to 1 as the number of queries increases
is when $\p=1$ which, as mentioned, corresponds to an infallible expert.

A final observation regarding our example is the case where the value of the parameter $\p$ satisfies $\p<0.5$. Using the formulas of Table\,\ref{tab:1} we can show that we obtain exactly the same results as using, instead of $\p$, the value $1-\p>0.5$. The only modification we need to make is to exchange the roles of $m=1$ and $m=2$ in the thresholds $s_k(m)$. We can see why this modification is necessary by considering the extreme case $\p=0$ corresponding to $\Pro(\zeta_t=1|\xi_t=1)=0$. Then, when $M=2$, it is of course true that $\Pro(\zeta_t=2|\xi_t=1)=1$ and, therefore, we now have that the value $\zeta_t=2$ corresponds with certainty to $\{\xi_t=1\}$. This exchange of roles between $m=1$ and $m=2$ continues to apply when $0\leq\p<0.5$.

\section{Acknowledgements}
This work was supported by the US National Science Foundation under Grant CIF 1513373, through Rutgers University, also
in part by the NSF grants NSF CCF 15-26875, The Center for Science of Information at Purdue University, under contract number 239 SBC PURDUE 4101-38050 and by the DARPA molecular informatics program.

We are indepted to the anonymous reviewer whose comments helped us improve considerably the presentation of our results. We would like to particularly mention the decision mechanism proposed by our reviewer (presented in Remark\,\ref{rem:7}) which constitutes a very interesting alternative for optimally deciding whether to terminate or continue after each querying.

\appendix
\section*{Appendix}

\noindent \textbf{Proof of Lemma\,\ref{lem:1}.}~
For any collection of values $\{\xi_t,\ldots,\xi_1\}$ under the uniform model without replacement the validity of \eqref{eq:lem1-1} is well known, since $\xi_\ell$ takes one of $n-\ell+1$ values, each with the same probability $\frac{1}{n-\ell+1}$. Suppose now that the collection $\{\xi_t,\ldots,\xi_1\}$, when occurring sequentially, produces the sequence of ranks $\{z_t,\ldots,z_1\}$ where $1\leq z_t\leq t\leq n$. If we fix a collection $\{z_t,\ldots,z_1\}$ of $t$ ranks, making sure that they conform with the constraint $1\leq z_\ell\leq \ell$ and also select $t$ integers $1\leq i_1<i_2<\cdots<i_t\leq n$ as possible object values, then there is a \textit{unique} way to assign these values to $\{\xi_t,\ldots,\xi_1\}$ in order to produce the specified ranks. Indeed, we start with $\xi_t$ to which we assign the $z_t$-th value from the set $\{i_1,\ldots,i_t\}$, that is, the value $i_{z_t}$. We remove this element from the set of values and then we proceed to $\xi_{t-1}$ to which we assign the $z_{t-1}$-th element from the new list of values, etc. This procedure generates the specified ranks from any subset of $\{1,\ldots,n\}$ of size $t$.

As we just mentioned, for fixed ranks $\{z_t,\ldots,z_1\}$ \textit{any} subset of $t$ integers from the set $\{1,\ldots,n\}$ can be uniquely rearranged and assigned to $\{\xi_t,\ldots,\xi_1\}$ in order to generate the specified rank sequence. There are $\binom{n}{t}$ such possible combinations with each combination having a probability of occurrence equal to $\frac{(n-t)!}{n!}$. Multiplying the two quantities yields the second equality in \eqref{eq:lem1-1}, from which we can then deduce that $\Pro(z_t|\cZ_{t-1})=\frac{\Pro(z_t,\ldots,z_1)}{\Pro(z_{t-1},\ldots,z_1)}=\frac{1}{t}$ and prove the third equality in \eqref{eq:lem1-1}.

Consider now \eqref{eq:lem1-3}. If $\xi_{t_1}=1$ this forces $z_{t_1}=1$ and all ranks for times larger than $t_1$ to be necessarily larger than 1. This is expressed through the indicator $\ind{z_{t_1}=1}\1_{t_1+1}^t=\ind{z_{t_1}=1}(\prod_{\ell=t_1+1}^t\ind{z_\ell>1})$. With $\xi_{t_1}=1$ and $z_{t_1}=1$ let us fix the remaining ranks in $\cZ_t$ assuring they are consistent with the constraint imposed for times larger than $t_1$ and also recalling that $z_\ell$ must take values in the set $\{1,\ldots,\ell\}$. We can now see that we are allowed to select the values of $t-1$ objects from a pool of $n-1$ integers (since the value 1 is already assigned to $\xi_{t_1}$). This generates $\binom{n-1}{t-1}$ combinations and each combination, including also the fact that $\xi_{t_1}=1$, has probability  of occurrence equal to $\frac{(n-t)!}{n!}$. If we multiply the two quantities then we obtain \eqref{eq:lem1-3}. Applying \eqref{eq:lem1-3} for $t_1=t$ (possible since $t\geq t_1$) and using the fact that by definition $\1_{t+1}^t=1$, we obtain \eqref{eq:lem1-2}. This concludes the proof of the lemma.\qed

\vskip0.2cm

\noindent\textbf{Proof of Lemma\,\ref{lem:2}.}~
We begin with \eqref{eq:lem2-1} and we note
\begin{equation}
\Pro(\xi_t=1|\cZ_t^k)=\frac{\Pro(\xi_t=1,\cZ_t^k)}{\Pro(\cZ_t^k)}
=\frac{\Pro(\xi_t=1,\cZ_t^k)}{\Pro(\cZ_t^k)}\ind{z_t=1},
\label{eq:app-lem2-1}
\end{equation}
where the last equality is due to the fact that $\xi_t=1$ forces $z_t$ to become 1 as well, therefore the numerator is 0 if $z_t\neq1$.
This property is captured with the indicator $\ind{z_t=1}$. We can now write
\begin{align*}  \allowdisplaybreaks
&\Pro(\xi_t=1,\cZ_t^k)=\Pro(\xi_t=1,\cZ_{t}^k,\xi_{t_k}=1)+\Pro(\xi_t=1,\cZ_{t}^k,\xi_{t_k}>1)\\
&~~=\Pro(\xi_t=1,\cZ_{t}^k,\xi_{t_k}>1)=\Pro(\xi_t=1,\zeta_{t_k},\cZ_{t}^{k-1},\xi_{t_k}>1)\\
&~~=\Pro(\xi_t=1,\zeta_{t_k},\cZ_{t}^{k-1}|\xi_{t_k}>1)\Pro(\xi_{t_k}>1)\\
&~~=\Pro(\zeta_{t_k}|\xi_{t_k}>1)\Pro(\xi_t=1,\cZ_{t}^{k-1}|\xi_{t_k}>1)\Pro(\xi_{t_k}>1)
=\q(\zeta_{t_k})\Pro(\xi_t=1,\cZ_{t}^{k-1},\xi_{t_k}>1)\\
&~~=\q(\zeta_{t_k})\big\{\Pro(\xi_t=1,\cZ_{t}^{k-1})-\Pro(\xi_t=1,\cZ_{t}^{k-1},\xi_{t_k}=1)\big\}
=\q(\zeta_{t_k})\Pro(\xi_t=1,\cZ_{t}^{k-1}),
\end{align*}
where $\q(\zeta_{t_k})$, following the model in \eqref{eq:rand-model}, denotes the probability of the expert responding with the value $\zeta_{t_k}$ given that $\xi_{t_k}>1$. We also observe that the second and last equality are true due to the fact that fourth is impossible for two objects at different time instances to have the same value, while the forth equality is true because, according to our model when we condition on $\{\xi_{t_k}>1\}$ then $\zeta_{t_k}$ is independent of all ranks, other responses and other object values. 

In order to modify the denominator in \eqref{eq:app-lem2-1}, due to the indicator $\ind{z_t=1}$ in the numerator it is sufficient to analyze the denominator by fixing $z_t=1$. Specifically
%
\begin{align*}  \allowdisplaybreaks
&\Pro(\cZ_{t}^k,z_t=1)=\Pro(\cZ_{t}^k,z_t=1,\xi_{t_k}=1)+\Pro(\cZ_{t}^k,z_t=1,\xi_{t_k}>1)\allowdisplaybreaks\\ \allowdisplaybreaks
&~~=\Pro(\cZ_{t}^k,z_t=1,\xi_{t_k}>1)=\Pro(\zeta_{t_k},\cZ_{t}^{k-1},z_t=1,\xi_{t_k}>1)\allowdisplaybreaks\\ \allowdisplaybreaks
&~~=\Pro(\zeta_{t_k}|\xi_{t_k}>1)\Pro(\cZ_{t}^{k-1},z_t=1,\xi_{t_k}>1)\allowdisplaybreaks\\ \allowdisplaybreaks
&~~=\q(\zeta_{t_k})\big\{\Pro(\cZ_{t}^{k-1},z_t=1)-\Pro(\cZ_{t}^{k-1},z_t=1,\xi_{t_k}=1)\big\}
=\q(\zeta_{t_k})\Pro(z_t=1,\cZ_{t}^{k-1}),
\end{align*}
where in the third and the last equality we used the fact that for $t>t_k$ we cannot have $z_t=1$ when $\xi_{t_k}=1$ because this requires $\xi_t<\xi_{t_k}$ which is impossible since $\xi_{t_k}=1$. Dividing the numerator by the denominator proves that
\begin{align*}
\Pro(\xi_t=1|\cZ_t^k)&=\frac{\Pro(\xi_t=1,\cZ_{t}^{k-1})}{\Pro(z_t=1,\cZ_{t}^{k-1})}\ind{z_t=1}\\
&=\frac{\Pro(\xi_t=1,\cZ_t^{k-1})}{\Pro(\cZ_t^{k-1})}\ind{z_t=1}
=\Pro(\xi_t=1|\cZ_t^{k-1})\ind{z_t=1}.
\end{align*}
In other words, 
given that a new relatively best object appears, the probability that it is the best of all the objects is conditionally independent of the previous expert response.
Applying this equality repeatedly we conclude that
\begin{equation}
\Pro(\xi_t=1|\cZ_t^k)=\Pro(\xi_t=1|\cZ_t^0)\ind{z_t=1}=\Pro(\xi_t=1|\cZ_t)\ind{z_t=1}
\label{eq:lem2-A000}
\end{equation}
namely, the conditional probability is independent of all past expert responses. Combining the second equality in \eqref{eq:lem1-1} with \eqref{eq:lem1-2} we can now show
$$
\Pro(\xi_t=1|\cZ_t)=\frac{\Pro(\xi_t=1,\cZ_t)}{\Pro(\cZ_t)}=\frac{\frac{1}{(t-1)!n}\ind{z_t=1}}{\frac{1}{t!}}=\frac{t}{n}\ind{z_t=1},
$$
which if substituted into \eqref{eq:lem2-A000} proves \eqref{eq:lem2-1} and suggests that the desired conditional probability $\Pro(\xi_t=1|\cZ_t^k)$ depends only on $t$ and $z_t$ and not on $\cZ_{t-1}^k$, namely previous ranks and previous expert responses. 

To show \eqref{eq:lem2-2} we observe that
$$
\Pro(\xi_{t_k}=1|\cZ_{t_k}^k)
=\frac{\Pro(\xi_{t_k}=1,\cZ_{t_k}^k)}{\Pro(\cZ_{t_k}^k)}.
$$
For the numerator using similar steps as before, we can write
\begin{align*}
\Pro(\xi_{t_k}=1,\cZ_{t_k}^k)&=\Pro(\xi_{t_k}=1,\zeta_{t_k},\cZ_{t_k}^{k-1})
=\p(\zeta_{t_k})\Pro(\xi_{t_k}=1,\cZ_{t_k}^{k-1})\\
&=\p(\zeta_{t_k})\Pro(\xi_{t_k}=1|\cZ_{t_k}^{k-1})\Pro(\cZ_{t_k}^{k-1})
=\p(\zeta_{t_k})\frac{t_k}{n}\ind{z_{t_k}=1}\Pro(\cZ_{t_k}^{k-1})
\end{align*}
where for the second equality we first conditioned on $\{\xi_{t_k}=1\}$ and used the fact that $\zeta_{t_k}$ is independent of any other information, while for the last equality we applied \eqref{eq:lem2-1}.

Similarly, for the denominator we have
\begin{align}
\Pro(\cZ_{t_k}^k)&=\Pro(\zeta_{t_k},\cZ_{t_k}^{k-1})=\Pro(\xi_{t_k}=1,\zeta_{t_k},\cZ_{t_k}^{k-1})+\Pro(\xi_{t_k}>1,\zeta_{t_k},\cZ_{t_k}^{k-1})\nonumber\\
&=\p(\zeta_{t_k})\Pro(\xi_{t_k}=1,\cZ_{t_k}^{k-1})+\q(\zeta_{t_k})\Pro(\xi_{t_k}>1,\cZ_{t_k}^{k-1})\nonumber\\
&=\big\{\p(\zeta_{t_k})\Pro(\xi_{t_k}=1|\cZ_{t_k}^{k-1})+\q(\zeta_{t_k})\Pro(\xi_{t_k}>1|\cZ_{t_k}^{k-1})\big\}\Pro(\cZ_{t_k}^{k-1})\nonumber\\
&=\Big\{\p(\zeta_{t_k})\frac{t_k}{n}\ind{z_{t_k}=1}+\q(\zeta_{t_k})\Big(1-\frac{t_k}{n}\ind{z_{t_k}=1}\Big)\Big\}\Pro(\cZ_{t_k}^{k-1})
\label{eq:A1-100}
\end{align}
where for the last equality we applied the same idea we used in the last equality of the numerator.
Dividing the numerator by the denominator and using the fact that in the numerator we have the indicator $\ind{z_{t_k}=1}$, it is easy to verify that we obtain the expression appearing in~\eqref{eq:lem2-2}. 

To prove \eqref{eq:lem2-3} we have
\begin{align*}
\Pro(\zeta_{t_k}|\cZ_{t_k}^{k-1},z_{t_k}=1)&=\frac{\Pro(\zeta_{t_k},\cZ_{t_k}^{k-1},z_{t_k}=1)}{\Pro(\cZ_{t_k}^{k-1},z_{t_k}=1)}=
\frac{\Pro(\cZ_{t_k}^{k},z_{t_k}=1)}{\Pro(\cZ_{t_k}^{k-1},z_{t_k}=1)}\\
&=\p(\zeta_{t_k})\frac{t_k}{n}+\q(\zeta_{t_k})\Big(1-\frac{t_k}{n}\Big),
\end{align*}
where for the last equality we used \eqref{eq:A1-100} and applied it for $z_{t_k}=1$.

Let us now demonstrate \eqref{eq:lem2-4} which is the relationship that distinguishes our random model from the classical infallible expert case. We observe that
$$ 
\Pro(z_t=1|\cZ_{t-1}^k,z_{t_k}=1)
=\frac{\Pro(z_t=1,\cZ_{t-1},z_{t_k}=1,\zeta_{t_k},\ldots,\zeta_{t_1})}{\Pro(\cZ_{t-1},z_{t_k}=1,\zeta_{t_k},\ldots,\zeta_{t_1})}
$$ 
As before we can write for the numerator
\begin{align*}
&\Pro(z_t=1,\cZ_{t-1},z_{t_k}=1,\zeta_{t_k},\ldots,\zeta_{t_1})
=\Pro(z_t=1,\cZ_{t-1},z_{t_k}=1,\zeta_{t_k},\ldots,\zeta_{t_1},\xi_{t_1}>1)\\
&\hskip1cm=\Pro(z_t=1,\cZ_{t-1},z_{t_k}=1,\zeta_{t_k},\ldots,\zeta_{t_2},\xi_{t_1}>1)\q(\zeta_{t_1})\\
&\hskip1cm=\Pro(z_t=1,\cZ_{t-1},z_{t_k}=1,\zeta_{t_k},\ldots,\zeta_{t_2})\q(\zeta_{t_1}).
\end{align*}
For the denominator, due to the constraint $z_{t_k}=1$, we can follow similar steps as in the numerator and show
$$  
\Pro(\cZ_{t-1},z_{t_k}=1,\zeta_{t_k},\ldots,\zeta_{t_1})=
\Pro(\cZ_{t-1},z_{t_k}=1,\zeta_{t_k},\ldots,\zeta_{t_2})\q(\zeta_{t_1}).
$$  
Dividing the numerator by the denominator yields
$$  
\frac{\Pro(z_t=1,\cZ_{t-1},z_{t_k}=1,\zeta_{t_k},\ldots,\zeta_{t_1})}{\Pro(\cZ_{t-1},z_{t_k}=1,\zeta_{t_k},\ldots,\zeta_{t_1})}=
\frac{\Pro(z_t=1,\cZ_{t-1},z_{t_k}=1,\zeta_{t_k},\ldots,\zeta_{t_2})}{\Pro(\cZ_{t-1},z_{t_k}=1,\zeta_{t_k},\ldots,\zeta_{t_2})},
$$  
which suggests that the first ratio does not depend on $\zeta_{t_1}$. Following similar steps we can remove all previous expert responses one-by-one and prove that
$$ 
\Pro(z_t=1|\cZ_{t-1}^k,z_{t_k}=1)
=\frac{\Pro(z_t=1,\cZ_{t-1},z_{t_k}=1,\zeta_{t_k})}{\Pro(\cZ_{t-1},z_{t_k}=1,\zeta_{t_k})},
$$ 
namely, the conditional probability depends only on the most recent expert response. It is possible now to obtain more suitable expressions for the numerator and the denominator. We start with the numerator and apply similar steps as above. Specifically
\begin{align*}
\Pro(z_t=1,\cZ_{t-1},z_{t_k}=1,\zeta_{t_k})
&=\Pro(z_t=1,\cZ_{t-1},z_{t_k}=1,\zeta_{t_k},\xi_{t_k}>1)\\
&=\Pro(z_t=1,\cZ_{t-1},z_{t_k}=1)\q(\zeta_{t_k})=\frac{1}{t!}\q(\zeta_{t_k}),
\end{align*}
where for the last expression we used the second equality in \eqref{eq:lem1-1} after observing that $\{z_t=1,\cZ_{t-1},z_{t_k}=1\}$ is simply $\cZ_t$ with two of its elements fixed to 1.

For the denominator we can similarly write
\begin{align*}
&\Pro(\cZ_{t-1},z_{t_k}=1,\zeta_{t_k})=\Pro(\cZ_{t-1},z_{t_k}=1,\zeta_{t_k},\xi_{t_k}=1)+\Pro(\cZ_{t-1},z_{t_k}=1,\zeta_{t_k},\xi_{t_k}>1)\\
&\hskip1cm=\Pro(\cZ_{t-1},z_{t_k}=1,\xi_{t_k}=1)\p(\zeta_{t_k})+\Pro(\cZ_{t-1},z_{t_k}=1,\xi_{t_k}>1)\q(\zeta_{t_k})\\
&\hskip1cm=\Pro(\cZ_{t-1},z_{t_k}=1,\xi_{t_k}=1)\p(\zeta_{t_k})\\
&\hskip2cm+\big\{\Pro(\cZ_{t-1},z_{t_k}=1)-\Pro(\cZ_{t-1},z_{t_k}=1,\xi_{t_k}=1)\big\}\q(\zeta_{t_k})\\
&\hskip1cm=\Big\{\frac{1}{(t-2)!n}\1_{t_k+1}^{t-1}\Big\}\p(\zeta_{t_k})+\Big\{\frac{1}{(t-1)!}-\frac{1}{(t-2)!n}\1_{t_k+1}^{t-1}\Big\}\q(\zeta_{t_k}),
\end{align*}
where to obtain the last expression we applied the second equality of \eqref{eq:lem1-1} combined with \eqref{eq:lem1-3} after observing that by fixing $z_{t_k}=1$ the product $\ind{z_{t_k}=1}\1_{t_k+1}^{t-1}$ produced by \eqref{eq:lem1-3} becomes $\1_{t_k+1}^{t-1}$. Dividing the numerator by the denominator we can verify that the resulting ratio matches the right hand side of \eqref{eq:lem2-4} for the two possible values of $\1_{t_k+1}^{t-1}$, namely 0 or 1. This completes the proof of Lemma\,\ref{lem:2}.\qed
\vskip0.2cm

\noindent\textbf{Proof of Theorem\,\ref{th:1}.}~
We first note that $\Psucc=\P_{K}$ where $\P_{K}$ satisfies \eqref{eq:th1-2.5} for $k=K$ and $\U_t^{K+1}=\frac{t}{n}$. Consider now the general form of $\P_k$ defined in \eqref{eq:th1-2.5}. We focus on the last term which we intend to optimize with respect to $\T_{k+1}$. Observe that
\begin{align}
&\Exp\big[\U_{\T_{k+1}}^{k+1}\ind{z_{\T_{k+1}}=1}\I_{\cB_{\T_1}^{\T_{k}}}\big]\nonumber\\
&~~=\mathop{\sum^{n}\cdots\sum^{n}}_{t_k>\cdots>t_1>0}\Exp\big[\U_{\T_{k+1}}^{k+1}\ind{z_{\T_{k+1}}=1}\ind{\T_{k+1}>t_k}\ind{\T_k=t_k}\cdots\ind{\T_1=t_1}\I_{\cB_{t_1}^{t_{k}}}\big]\nonumber\\
&~~=\mathop{\sum^{n}\cdots\sum^{n}}_{t_k>\cdots>t_1>0}\Exp\Big[\Exp\big[\U_{\T_{k+1}}^{k+1}\ind{z_{\T_{k+1}}=1}|\cZ_{t_k}^k\big]\ind{\T_{k+1}>t_k}\ind{\T_k=t_k}\cdots\ind{\T_1=t_1}\I_{\cB_{t_1}^{t_{k}}}\Big],
\label{eq:th1-A1}
\end{align}
where in the last equality, as we point out in Remark\,\ref{rem:3}, the indicators $\ind{\T_{k+1}>t_k}$, $\ind{\T_k=t_k},\ldots,$ $\ind{\T_1=t_1}$, $\I_{\cB_{t_1}^{t_{k}}}$ are measurable with respect to $\cZ_{t_k}^k$ and, consequently, can be placed outside the inner expectation.

We could isolate the inner expectation and optimize it by solving the optimal stopping problem
\begin{equation} 
\max_{\T_{k+1}>t_{k}}\Exp\big[\U_{\T_{k+1}}^{k+1}\ind{z_{\T_{k+1}}=1}|\cZ_{t_k}^k\big],
\label{eq:th1-A1.25}
\end{equation} 
with respect to $\T_{k+1}$. Unfortunately the proposed optimization turns out to be unnecessarily involved resulting in an optimal reward which is a complicated expression of the information $\cZ_{t_k}^k$. After careful examination, and recalling from \eqref{eq:BandD} that $\I_{\cB_{t_1}^{t_{k}}}$ contains the indicator $\ind{z_{t_k}=1}$, it is sufficient to consider the case where $z_{t_k}$ is fixed to the value 1. This constraint simplifies considerably our analysis and it is the main reason we have developed equalities \eqref{eq:lem2-3}, \eqref{eq:lem2-4} in Lemma\,\ref{lem:2}. We also recall that $z_{t_k}=1$, according to Remark\,\ref{rem:1}, is a prerequisite for querying at $t_k$. 

After this observation, we replace \eqref{eq:th1-A1} with the alternative relationship
\begin{multline}
\Exp\big[\U_{\T_{k+1}}^{k+1}\ind{z_{\T_{k+1}}=1}\I_{\cB_{\T_1}^{\T_{k}}}\big]=\\
\!\!\!\mathop{\sum^{n}\cdots\sum^{n}}_{t_k>\cdots>t_1>0}\!\!\Exp\Big[\Exp\big[\U_{\T_{k+1}}^{k+1}\ind{z_{\T_{k+1}}=1}|\cZ_{t_k}^k,z_{t_k}=1\big]\ind{\T_{k+1}>t_k}\ind{\T_k=t_k}\cdots\ind{\T_1=t_1}\I_{\cB_{t_1}^{t_{k}}}\Big].\!\!
\label{eq:th1-A1.5}
\end{multline}
Again, we emphasize that we are allowed to make this specific conditioning because the value $z_{t_k}=1$ is imposed by the indicator $\ind{z_{t_k}=1}$ contained in $\I_{\cB_{t_1}^{t_{k}}}$. Let us now isolate the inner expectation in \eqref{eq:th1-A1.5} and consider the following optimal stopping problem in place of \eqref{eq:th1-A1.25}
\begin{equation} 
\max_{\T_{k+1}>t_{k}}\Exp\big[\U_{\T_{k+1}}^{k+1}\ind{z_{\T_{k+1}}=1}|\cZ_{t_k}^k,z_{t_k}=1\big].
\label{eq:th1-A2}
\end{equation} 
Following \cite{PS,S2}, for $t>t_k$ we need to define the sequence of optimal rewards $\{\R_t^k\}$ where
\begin{equation} 
\R_t^k=\max_{\T_{k+1}\geq t}\Exp[\U_{\T_{k+1}}^{k+1}\ind{z_{\T_{k+1}}=1}|\cZ_t^k,z_{t_k}=1].
\label{eq:th1-A3}
\end{equation} 
From optimal stopping theory we have that $\{\R_t^{k}\}$ satisfies the backward recursion 
\begin{equation} 
\R_t^k=\max\big\{\U_t^{k+1}\ind{z_t=1},\Exp[\R_{t+1}^k|\cZ_t^k,z_{t_k}=1]\big\},
\label{eq:th1-A4}
\end{equation} 
which must be applied for $t=n,n-1,\ldots,t_k+1$ and initialized with $\R_{n+1}^k=0$. We recall that $t_k$ is excluded from the possible values of $\T_{k+1}$ since we require $\T_{k+1}>t_k$. 

In order to find an explicit formula for the reward, we use the definition of the sequence $\{\A_t^k\}$ from \eqref{eq:th1-1} and we introduce a second sequence $\{\B_{t}^k(m)\}$ satisfying the following backward recursion
\begin{multline}
\B_{t-1}^k(m)=\B_{t}^k(m)\Big(1-\frac{1}{t}\Big)\\
+\big(\A_t^k\!+\!\B_{t}^k(m)-\max\{\U_t^{k+1},\A_t^k\}\big)\frac{1}{t}\frac{\big(\p(m)-\q(m)\big)(t-1)}{\p(m)(t-1)+\q(m)(n-t+1)}
\label{eq:th1-A5}
\end{multline}
$t=n,\ldots,t_k,~m=1,\ldots,M$, which is initialized with $\B_{n}^k(m)=0$.
Actually, we are interested in the expected reward $\V_{t}^k=\Exp[\R_{t+1}^k|\cZ_t^k,z_{t_k}=1]$ for which we intend to show, using (backward) induction, that
\begin{equation}
\V_{t}^k=\A_t^k+\B_{t}^k(\zeta_{t_k})\1_{t_{k}+1}^t.
\label{eq:th1-A6}
\end{equation}
Indeed, we have that \eqref{eq:th1-A6} is true for $t=n$ since both the right and left hand sides are 0. Assume our claim is true for $t$, then we will show that it is also valid for $t-1$. Using \eqref{eq:th1-A4} we can write
\begin{align*}
\R_{t}^k&=\max\{\U_t^{k+1},\V_{t}^k\}\ind{z_t=1}+\V_{t}^k\ind{z_t>1}\\
&=\max\{\U_t^{k+1},\A_t^k+\B_{t}^k(\zeta_{t_k})\1_{t_{k}+1}^t\}\ind{z_t=1}+\big(\A_t^k+\B_{t}^k(\zeta_{t_k})\1_{t_{k}+1}^t\big)\ind{z_t>1}\\
&=\max\{\U_t^{k+1},\A_t^k\}\ind{z_t=1}+\big(\A_t^k+\B_{t}^k(\zeta_{t_k})\1_{t_{k}+1}^{t-1}\big)\ind{z_t>1}.
\end{align*} 
Taking expectations on both sides conditioned on $\{\cZ_{t-1}^k,z_{t_k}=1\}$, using \eqref{eq:lem2-4} and rearranging terms, it is not complicated to verify that $\V_{t-1}^k$ is also equal to $\A_{t-1}^k+\B_{t-1}^k(\zeta_{t_k})\1_{t_{k}+1}^{t-1}$, provided that $\{\A_t^k\}$ and $\{\B_{t}^k(m)\}$ are defined by \eqref{eq:th1-1} and \eqref{eq:th1-A5}, respectively.

Let us now return to the optimization problem in \eqref{eq:th1-A2}. According to our analysis, the optimal reward satisfies
\begin{equation} 
\max_{\T_{k+1}>t_k}\!\!\!\Exp[\U_{\T_{k+1}}^{k+1}\ind{z_{\T_{k+1}}=1}|\cZ_{t_k}^k,z_{t_k}=1]\!=\!\V_{t_k}^k\!=\!
\A_{t_k}^k\!+\!\B_{t_k}^k(\zeta_{t_k})\1_{t_{k}+1}^{t_{k}}\!=\!\A_{t_{k}}^k\!+\!\B_{t_{k}}^k(\zeta_{t_k}),
\label{eq:th1-A7}
\end{equation} 
since, according to our definition, $\1_a^b=1$ when $a>b$.

The next step consists in finding a more convenient expression for the sum $\A_t^k+\B_{t}^k(m)$. Again, using backward induction we prove that
\begin{equation}
\B_{t}^k(m)=\A_t^k\frac{\big(\q(m)-\p(m)\big)t}{\p(m)t+\q(m)(n-t)}.
\label{eq:th1-A6-2}
\end{equation}
Clearly, for $t=n$ this expression is true since both sides are 0. We assume it is true for $t$ and we will show that it is valid for $t-1$. Indeed, if we substitute \eqref{eq:th1-A6-2} into the definition in \eqref{eq:th1-A5} then, after some straightforward manipulations, we end up with the equality
$$
\B_{t-1}^k(m)=\Big\{\A_t^k\Big(1-\frac{1}{t}\Big)+\max\{\U_t^{k+1},\A_t^k\}\frac{1}{t}\Big\}\frac{\big(\q(m)-\p(m)\big)(t-1)}{\p(m)(t-1)+\q(m)(n-t+1)},
$$
which, with the help of \eqref{eq:th1-1}, proves the induction. 
Substituting \eqref{eq:th1-A6-2} into \eqref{eq:th1-A7} provides a more concise expression for the optimal reward 
\begin{equation} %
\max_{\T_{k+1}>t_k}\Exp[\U_{\T_{k+1}}^{k+1}\ind{z_{\T_{k+1}}=1}|\cZ_{t_k}^k,z_{t_k}=1]=
\A_{t_k}^k\frac{\q(\zeta_{t_k})n}{\p(\zeta_{t_k})t_k+\q(\zeta_{t_k})(n-t_k)},
\label{eq:th1-A8}
\end{equation} 
which depends only on the most recent expert response $\zeta_{t_k}$. Using \eqref{eq:th1-A8} we obtain the following (attainable) upper bound for \eqref{eq:th1-A1}
\begin{multline}
\Exp\big[\U_{\T_{k+1}}^{k+1}\ind{z_{\T_{k+1}}=1}\I_{\cB_{\T_1}^{\T_{k}}}\big]
\leq\Exp\Big[\A_{\T_k}^k\frac{\q(\zeta_{\T_k})n}{\p(\zeta_{\T_k})\T_k+\q(\zeta_{\T_k})(n-\T_k)}\I_{\cB_{\T_1}^{\T_{k}}}\Big]\\
=\Exp\Big[\A_{\T_k}^k\frac{\q(\zeta_{\T_k})n}{\p(\zeta_{\T_k})\T_k+\q(\zeta_{\T_k})(n-\T_k)}\ind{z_{\T_k}=1}\ind{\D_{\T_k}=0}\I_{\cB_{\T_1}^{\T_{k-1}}}\Big].
\label{eq:th1-A9}
\end{multline}

The optimal performance, according to optimal stopping theory, can be achieved by the following stopping time
$$
\T_{k+1}=\min\{t>t_k:\U_t^{k+1}\ind{z_t=1}\geq \A_t^k+\B_{t}^k(\zeta_{t_k})\1_{t_{k}+1}^t\}.
$$
The previous stopping rule gives the impression that the optimal $\T_{k+1}$ depends on the expert response value $\zeta_{t_k}$. However, we observe that the only way we can stop is if $z_t=1$ which forces the indicator $\1_{t_{k}+1}^t$ to become 0. Consequently, the optimal version of $\T_{k+1}$ is equivalent to
$$
\T_{k+1}=\min\{t>t_k:\U_t^{k+1}\ind{z_t=1}\geq \A_t^k\},
$$
which is independent of $\zeta_{t_k}$ and proves \eqref{eq:th1-4}.

We conclude that the solution of the optimization problem introduced in \eqref{eq:th1-A2} resulted in the identification of the optimal querying times $\T_1,\ldots,\T_K$ and the optimal final time $\Tf$ (since $\Tf=\T_{K+1}$). 
Let us now see how we can optimize the remaining elements of our search strategy, namely, the decision functions $\D_{\T_1},\ldots,\D_{\T_K}$. Consider the last component of the sum in \eqref{eq:th1-2.5} which can be written as follows
\begin{align}\allowdisplaybreaks  
&\Pro(\xi_{\T_k}=1,\D_{\T_k}=1,\cB_{\T_1}^{\T_{k-1}})\nonumber\allowdisplaybreaks\\ \allowdisplaybreaks
&~~=\mathop{\sum^{n}\cdots\sum^{n}}_{t_k>\cdots>t_1>0}\Exp[\ind{\xi_{t_k}=1}\ind{\T_k=t_k}\cdots\ind{\T_1=t_1}\ind{\D_{t_k}=1}\I_{\cB_{t_1}^{t_{k-1}}}]\nonumber\allowdisplaybreaks\\ \allowdisplaybreaks
&~~=\mathop{\sum^{n}\cdots\sum^{n}}_{t_k>\cdots>t_1>0}\Exp\big[\Pro(\xi_{t_k}=1|\cZ_{t_k}^k)\ind{\T_k=t_k}\cdots\ind{\T_1=t_1}\ind{\D_{t_k}=1}\I_{\cB_{t_1}^{t_{k-1}}}\big]\nonumber\allowdisplaybreaks\\ \allowdisplaybreaks
&~~=\mathop{\sum^{n}\cdots\sum^{n}}_{t_k>\cdots>t_1>0}\Exp\Big[\frac{\p(\zeta_{t_k})t_k}{\p(\zeta_{t_k})t_k+\q(\zeta_{t_k})(n-t_k)}\ind{z_{t_k}=1}\ind{\T_k=t_k}\cdots\ind{\T_1=t_1}\ind{\D_{t_k}=1}\I_{\cB_{t_1}^{t_{k-1}}}\Big]\nonumber\allowdisplaybreaks\\ \allowdisplaybreaks
&~~=\Exp\Big[\frac{\p(\zeta_{\T_k})\T_k}{\p(\zeta_{\T_k})\T_k+\q(\zeta_{\T_k})(n-\T_k)}\ind{z_{\T_k}=1}\ind{\D_{\T_k}=1}\I_{\cB_{\T_1}^{\T_{k-1}}}\Big].
\label{eq:th1-A11}\allowdisplaybreaks
\end{align}  
The second equality is true because we condition on $\cZ_t^k$ and since all indicator functions are measurable with respect to this sigma algebra they can be placed outside the inner expectation which gives rise to the conditional probability. For the third equality we simply apply \eqref{eq:lem2-2}. If we now add the two parts analyzed in \eqref{eq:th1-A9} and \eqref{eq:th1-A11}, we can optimize the sum with respect to $\D_{\T_k}$. In particular
\begin{align}
&\Pro(\xi_{\T_k}=1,\D_{\T_k}=1,\cB_{\T_1}^{\T_{k-1}})+\Exp\big[\U_{\T_{k+1}}^{k+1}\ind{z_{\T_{k+1}}=1}\I_{\cB_{\T_1}^{\T_{k}}}\big]\nonumber\\
&\hskip1cm\leq\Exp\Big[\frac{\p(\zeta_{\T_k})\T_k}{\p(\zeta_{\T_k})\T_k+\q(\zeta_{\T_k})(n-\T_k)}\ind{z_{\T_k}=1}\ind{\D_{\T_k}=1}\I_{\cB_{\T_1}^{\T_{k-1}}}\Big]\nonumber\\
&\hskip2cm+\Exp\Big[\A_{\T_k}^k\frac{\q(\zeta_{\T_k})n}{\p(\zeta_{\T_k})\T_k+\q(\zeta_{\T_k})(n-\T_k)}\ind{z_{\T_k}=1}\ind{\D_{\T_k}=0}\I_{\cB_{\T_1}^{\T_{k-1}}}\Big]\nonumber\\
&\hskip1cm\leq\Exp\Big[\frac{\max\{\p(\zeta_{\T_k})\frac{\T_k}{n},\q(\zeta_{\T_k})\A_{\T_k}^k\}}{\p(\zeta_{\T_k})\frac{\T_k}{n}+\q(\zeta_{\T_k})(1-\frac{\T_k}{n})}\ind{z_{\T_k}=1}\I_{\cB_{\T_1}^{\T_{k-1}}}\Big].
\label{eq:th1-A12}
\end{align}
We attain the last upper bound if we select $\D_{\T_k}=1$ (i.e.~stop) when $\p(\zeta_{\T_k})\frac{\T_k}{n}\geq\q(\zeta_{\T_k})\A_{\T_k}^k$ and $\D_{\T_k}=0$ (i.e.~continue to the next query or final time if $k=K$) when the inequality is reversed. This clearly establishes \eqref{eq:th1-5} and identifies the optimal version of the decision functions.

As we can see the upper bound in \eqref{eq:th1-A12} is written in terms of the expert response $\zeta_{\T_k}$. In order to obtain an expression which has the same form as the one in \eqref{eq:th1-2.5} we need to average out this random variable. We note
\begin{align}
&\Exp\Big[\frac{\max\{\p(\zeta_{\T_k})\frac{\T_k}{n},\q(\zeta_{\T_k})\A_{\T_k}^k\}}{\p(\zeta_{\T_k})\frac{\T_k}{n}+\q(\zeta_{\T_k})(1-\frac{\T_k}{n})}\ind{z_{\T_k}=1}\I_{\cB_{\T_1}^{\T_{k-1}}}\Big]\nonumber\\
&~=\!\!\!\mathop{\sum^{n}\cdots\sum^{n}}_{t_k>\cdots>t_1>0}\!\Exp\Big[\frac{\max\{\p(\zeta_{t_k})\frac{t_k}{n},\q(\zeta_{t_k})\A_{t_k}^k\}}{\p(\zeta_{t_k})\frac{t_k}{n}+\q(\zeta_{t_k})(1-\frac{t_k}{n})}\ind{z_{t_k}=1}\ind{\T_k=t_k}\cdots\ind{\T_1=t_1}\I_{\cB_{t_1}^{t_{k-1}}}\Big]\nonumber\\
&~=\!\!\!\mathop{\sum^{n}\cdots\sum^{n}}_{t_k>\cdots>t_1>0}\!\sum_{m=1}^M\!\Exp\Big[\frac{\max\{\p(m)\frac{t_k}{n},\q(m)\A_{t_k}^k\}}{\p(m)\frac{t_k}{n}+\q(m)(1-\frac{t_k}{n})}\ind{\zeta_{t_k}=m}\ind{z_{t_k}=1}\ind{\T_k=t_k}\cdots\ind{\T_1=t_1}\I_{\cB_{t_1}^{t_{k-1}}}\Big]\nonumber\displaybreak\\
&~=\!\!\!\mathop{\sum^{n}\cdots\sum^{n}}_{t_k>\cdots>t_1>0}\!\sum_{m=1}^M\!\frac{\max\{\p(m)\frac{t_k}{n},\q(m)\A_{t_k}^k\}}{\p(m)\frac{t_k}{n}+\q(m)(1-\frac{t_k}{n})}\Exp[\ind{\zeta_{t_k}=m}\ind{z_{t_k}=1}\ind{\T_k=t_k}\cdots\ind{\T_1=t_1}\I_{\cB_{t_1}^{t_{k-1}}}],\!\!\!\!
\label{eq:th1-A13}
\end{align}
with the last equality being true because the ratio is deterministic. Consider the last expectation separately. Because of the existence of the indicator $\ind{z_{t_k}=1}$ we can write
\begin{align*}
&\Exp[\ind{\zeta_{t_k}=m}\ind{z_{t_k}=1}\ind{\T_k=t_k}\cdots\ind{\T_1=t_1}\I_{\cB_{t_1}^{t_{k-1}}}]\\
&\hskip1cm=\Exp[\Pro(\zeta_{t_k}=m|\cZ_{t_k}^{k-1},z_{t_k}=1)\ind{z_{t_k}=1}\ind{\T_k=t_k}\cdots\ind{\T_1=t_1}\I_{\cB_{t_1}^{t_{k-1}}}]\\
&\hskip1cm=\Exp\Big[\Big\{\p(m)\frac{t_k}{n}+\q(m)\Big(1-\frac{t_k}{n}\Big)\Big\} \ind{z_{t_k}=1}\ind{\T_k=t_k}\cdots\ind{\T_1=t_1}\I_{\cB_{t_1}^{t_{k-1}}}\Big]\\
&\hskip1cm=\Big\{\p(m)\frac{t_k}{n}+\q(m)\Big(1-\frac{t_k}{n}\Big)\Big\} \Exp[\ind{z_{t_k}=1}\ind{\T_k=t_k}\cdots\ind{\T_1=t_1}\I_{\cB_{t_1}^{t_{k-1}}}].
\end{align*}
According to Remark\,\ref{rem:3} all indicators are $\{\cZ_{t_k}^{k-1},z_{t_k}=1\}$-measurable and this allowed us in the first equation to position them outside the inner expectation which resulted in the conditional probability. For the second equation we used \eqref{eq:lem2-3}. Substituting into \eqref{eq:th1-A13} we obtain
\begin{align}
&\Exp\Big[\frac{\max\{\p(\zeta_{\T_k})\frac{\T_k}{n},\q(\zeta_{\T_k})\A_{\T_k}^k\}}{\p(\zeta_{\T_k})\frac{\T_k}{n}+\q(\zeta_{\T_k})(1-\frac{\T_k}{n})}\ind{z_{\T_k}=1}\I_{\cB_{\T_1}^{\T_{k-1}}}\Big]\nonumber\\
&~~=\!\!\mathop{\sum^{n}\cdots\sum^{n}}_{t_k>\cdots>t_1>0}\!\Big(\sum_{m=1}^M\max\Big\{\p(m)\frac{t_k}{n},\q(m)\A_{t_k}^k\Big\}\Big)
\Exp[\ind{z_{t_k}=1}\ind{\T_k=t_k}\cdots\ind{\T_1=t_1}\I_{\cB_{t_1}^{t_{k-1}}}]\nonumber\\
&~~=\!\!\mathop{\sum^{n}\cdots\sum^{n}}_{t_k>\cdots>t_1>0}\U_{t_k}^k\!
\Exp[\ind{z_{t_k}=1}\ind{\T_k=t_k}\cdots\ind{\T_1=t_1}\I_{\cB_{t_1}^{t_{k-1}}}]\nonumber\\
&~~=\!\!\mathop{\sum^{n}\cdots\sum^{n}}_{t_k>\cdots>t_1>0}\!
\Exp[\U_{t_k}^k\ind{z_{t_k}=1}\ind{\T_k=t_k}\cdots\ind{\T_1=t_1}\I_{\cB_{t_1}^{t_{k-1}}}]
\!=\!\Exp\big[\U_{\T_k}^k\ind{z_{\T_k}=1}\I_{\cB_{\T_1}^{\T_{k-1}}}],
\label{eq:th1-A14}
\end{align}
where we recall that $\U_t^k$ is deterministic and defined in \eqref{eq:th1-2}.

As we have seen, the sum of the two terms in \eqref{eq:th1-A12} is optimized in \eqref{eq:th1-A14}. A~direct consequence of this optimization is the following inequality
\begin{align*}
\P_k&=\sum_{\ell=1}^{k}\Pro(\xi_{\T_\ell}=1,\D_{\T_\ell}=1,\cB_{\T_1}^{\T_{\ell-1}})+
\Exp\big[\U_{\T_{k+1}}^{k+1}\ind{z_{\T_{k+1}}=1}\I_{\cB_{\T_1}^{\T_{k}}}\big]\\
&=\sum_{\ell=1}^{k-1}\Pro(\xi_{\T_\ell}=1,\D_{\T_\ell}=1,\cB_{\T_1}^{\T_{\ell-1}})\\
&\hskip2cm +\Pro(\xi_{\T_k}=1,\D_{\T_k}=1,\cB_{\T_1}^{\T_{k-1}})+\Exp\big[\U_{\T_{k+1}}^{k+1}\ind{z_{\T_{k+1}}=1}\I_{\cB_{\T_1}^{\T_{k}}}\big]\\
&\leq\sum_{\ell=1}^{k-1}\Pro(\xi_{\T_\ell}=1,\D_{\T_\ell}=1,\cB_{\T_1}^{\T_{\ell-1}})+
\Exp\big[\U_{\T_{k}}^{k}\ind{z_{\T_{k}}=1}\I_{\cB_{\T_1}^{\T_{k-1}}}\big]=\P_{k-1},
\end{align*}
namely $\P_k\leq\P_{k-1}$. Repeated application of this fact for $k=K,\ldots,1$, proves \eqref{eq:th1-3} except for the last inequality. In other words, we have
$$
\Psucc=\P_K\leq\P_{K-1}\leq\cdots\leq\P_0=\Exp[\U_{\T_1}^1].
$$
The last expectation can be further optimized with respect to $\T_1$ using our results from \eqref{eq:th1-A3} for $k=0$. In fact the corresponding optimization is far simpler than the general case considered in \eqref{eq:th1-A3} because there is no query response available and therefore the elements of the sequences $\{\B_t^0(m)\}$ are equal to 0. This also implies that the corresponding optimal average reward, from \eqref{eq:th1-A6}, is equal to $\A_t^0$ which establishes the last inequality in \eqref{eq:th1-3} and concludes the proof of our main theorem.\qed 
\vskip0.2cm

\noindent\textbf{Proof of Lemma\,\ref{lem:3}.}~
Let us first prove $\A_t^k\leq\A_t^{k-1}$. We will show this fact using backward induction. To show its validity for $k=K$ we note from \eqref{eq:th1-2} that
$$
\U_t^K=\sum_{m=1}^M\max\big\{\p(m)\frac{t}{n},\q(m)\A_t^K\big\}\geq
\sum_{m=1}^M\p(m)\frac{t}{n}=\frac{t}{n}=\U_t^{K+1}.
$$
Applying now \eqref{eq:th1-1} for $k=K$ and $k=K-1$, using the previous inequality and the fact that $\A_n^K=\A_n^{K-1}=0$, we can easily show using backward induction in $t$ that $\A_t^{K}\leq\A_t^{K-1}$. Suppose now it is true for $k$, that is, $\A_t^k\leq\A_t^{k-1}$, then we will show that $\A_t^{k-1}\leq\A_t^{k-2}$. From $\A_t^k\leq\A_t^{k-1}$ and \eqref{eq:th1-2} we conclude that $\U_t^k\leq\U_t^{k-1}$. Expressing $\A_t^{k-1}$ and $\A_t^{k-2}$ with the help of \eqref{eq:th1-1}, using the facts that $\U_t^k\leq\U_t^{k-1}$ and $\A_n^{k-1}=\A_n^{k-2}=0$, we can again prove using backward induction in $t$ that $\A_t^{k-1}\leq\A_t^{k-2}$ therefore completing the induction. The monotonicity in $k$ of $\U_t^k$ is a direct consequence of \eqref{eq:th1-2} and of the same monotonicity of $\A_t^k$.

To establish that $\{A_t^k\}$ is decreasing in $t$ we use \eqref{eq:th1-1} and observe that
$$
\A_{t-1}^k-\A_t^k=\Big(\max\Big\{\U_t^{k+1},\A_t^k\Big\}-\A_t^k\Big)\frac{1}{t}\geq0
$$
which proves the desired inequality. Demonstrating that $\{\U_t^k\}$ is increasing in $t$ requires more work.
From the definition in \eqref{eq:th1-2} and using \eqref{eq:th1-1} to replace $\A_{t-1}^k$, we have that
\begin{align*}
\U_{t-1}^{k}&=\sum_{m=1}^M\max\Big\{\p(m)\frac{t-1}{n},\q(m)\A_{t-1}^k\Big\}\\
&=\sum_{m=1}^M\max\Big\{\p(m)\frac{t}{n}\Big(1-\frac{1}{t}\Big),\q(m)\Big[\A_t^k\Big(1-\frac{1}{t}\Big)+\max\{\U_t^{k+1},\A_t^k\}\frac{1}{t}\Big]\Big\}\\
&\leq\sum_{m=1}^M\left(\max\Big\{\p(m)\frac{t}{n},\q(m)\A_t^k\Big\}\Big(1-\frac{1}{t}\Big)+\q(m)\max\{\U_t^{k+1},\A_t^k\}\frac{1}{t}\right)\\
&=\U_t^k\Big(1-\frac{1}{t}\Big)+\max\big\{\U_t^{k+1},\A_t^k\big\}\frac{1}{t}=\U_t^k+\Big(\max\big\{\U_t^{k+1},\A_t^k\big\}-\U_t^k\Big)\frac{1}{t},
\end{align*}
with the inequality being true because $\max\{ad,bd+c\}\leq\max\{a,b\}d+c$ when $c,d\geq0$.
To establish that $\U_{t-1}^k\leq\U_t^k$ it suffices to prove that $\U_t^k\geq\max\{\U_t^{k+1},\A_t^k\}$, namely that $\U_t^k\geq\U_t^{k+1}$ (which we already know to be the case) and $\U_t^k\geq\A_t^k$. To show the latter, from its definition in \eqref{eq:th1-2} we can see that
$\U_t^k\geq\sum_{m=1}^M\q(m)\A_t^k=\A_t^k$ and this establishes the desired result. 

To complete our proof we still need to show that $\A_n^k\leq\U_n^{k+1}$ and $\A_0^k\geq\U_0^{k+1}$. We recall that $\A_n^k=0$. On the other hand from \eqref{eq:th1-2} we can see that $\U_n^{k+1}=1$, therefore the first inequality is true. For the second, again from \eqref{eq:th1-2}, we observe that $\U_0^{k+1}=\A_0^{k+1}$ and since we previously established that $\A_t^k$ is decreasing in $k$ for fixed $t$, this proves the second inequality and concludes our proof.\qed

%

\end{document}